\documentclass[conference]{IEEEtran}
\IEEEoverridecommandlockouts
\usepackage{cite}
\usepackage{amsmath,amssymb,amsfonts}
\usepackage{algorithmic}
\usepackage{graphicx}
\usepackage{textcomp}
\usepackage{xcolor}
\usepackage{microtype}
\usepackage{bm}
\usepackage{amsthm}
\usepackage{url}
\usepackage{hyperref}

%

\newtheorem{theorem}{Theorem}
\newtheorem{lemma}{Lemma}
\newtheorem{definition}{Definition}


\newcommand{\remove}[1]{}

\newcommand{\Ra}{\Rightarrow}
\newcommand{\ra}{\rightarrow}
\newcommand{\CC}{{L}}
\newcommand{\CB}{{L_B}}
\newcommand{\C}{G}
\newcommand{\RR}{\mathbf{R}}

\DeclareMathOperator{\frontier}{frontier}
\DeclareMathOperator{\forbidden}{forbidden}

\usepackage{microtype}


\begin{document}
\title{Applying Predicate Detection to the Constrained Optimization Problems
\thanks{
Supported by NSF CNS-1812349, CNS-1563544, and the Cullen Trust for Higher Education Endowed Professorship}
}
\author{Vijay K. Garg,\\
  The University of Texas at Austin,\\
  Department of Electrical and Computer Engineering,\\
  Austin, TX 78712, USA}

\bibliographystyle{plainurl}


\maketitle

\begin{abstract}
We present a method to design parallel algorithms for  constrained combinatorial optimization problems. Our method solves and generalizes many
classical combinatorial optimization problems including the stable marriage problem, the shortest path problem and the market clearing price problem.
These three problems are solved in the literature using Gale-Shapley algorithm, Dijkstra's algorithm, and 
Demange, Gale, Sotomayor algorithm. Our method solves all these problems by casting them as searching for an element that 
satisfies an appropriate predicate in a distributive lattice. Moreover, it solves generalizations of all these problems  - namely finding the optimal solution
satisfying additional constraints called {\em lattice-linear} predicates.
For stable marriage problems, an example of such a constraint is that Peter's regret is less than that of Paul.
For shortest path problems, an example of such a constraint is that cost of reaching vertex $v_1$ is at least the cost of reaching vertex $v_2$.
For the market clearing price problem, an example of such a constraint is that $item_1$ is priced at least as much as $item_2$.
In addition to finding the optimal solution,  our method is useful in enumerating all constrained stable matchings, and all constrained market clearing price vectors.
 \end{abstract}

\begin{IEEEkeywords}
Stable Matching, lattice-linear Predicates, Distributive Lattices
\end{IEEEkeywords}

\section{Introduction}
We present a method called {\em lattice-linear
predicate detection} that can solve many combinatorial optimization problems. We use this method to solve
generalization of three of the most fundamental problems in combinatorial optimization --- the stable marriage problem  \cite{gale1962college}, 
the shortest path problem \cite{Dijkstra1959}, and the assignment problem \cite{munkres1957algorithms}. 
Due to the importance and applications of these problems, each one of them
has been the subject of numerous books and thousands of papers. The classical algorithms to solve these problems are
Gale-Shapley algorithm for the stable marriage problem  \cite{gale1962college}, Dijkstra's algorithm for the shortest path problem \cite{Dijkstra1959}, and Kuhn's Hungarian method
to solve the assignment problem \cite{munkres1957algorithms} (or equivalently, Demange, Gale, Sotomayor auction-based algorithm \cite{demange1986multi} for market clearing prices).
Could there be a single efficient algorithm that solves all of these problems? 

In this paper, we describe a technique that solves not only these problems but more {\em  general} versions
of each of the above problems. We seek the optimal solution for these problems that satisfy additional constraints
modeled using a {\em lattice-linear} predicate \cite{chase1998detection}. When the set of constraints is empty,
we get back the classical problems.

Our technique requires the underlying search space to be viewed as a distributive lattice \cite{Birk3, davey}.
Common  to all these seemingly disparate combinatorial optimization problems is the
structure of the {\em feasible} solution space. 
The set of all stable matchings,  the set of all feasible rooted trees for the shortest path problem, and
the set of all market clearing prices are all closed under the meet operation of the lattice.
If the order is appropriately defined, then finding the optimal
solution (the man-optimal stable marriage, the shortest path cost vector, the minimum market clearing price vector) is equivalent
to finding the infimum of all feasible solutions in the lattice.

We note here that it is well-known that the set of stable matching and the set of market clearing price vectors form distributive 
lattices.
The set of stable matchings forms a distributive lattice is given in \cite{knuth1997stable} where the result is attributed to Conway.
The set of market clearing price vectors forms a distributive lattice is given in \cite{shapley1971assignment}.
However, the algorithms to find the man-optimal stable matching and the minimum market clearing price vectors are not derived 
from the lattice property. 
In our method, once the lattice-linearity of the feasible solution space is established, the
algorithm to find the optimal solution falls out as a consequence. To the best of our knowledge, this is the first paper to 
derive Gale-Shapley's algorithm, Dijkstra's algorithm and Demange-Gale-Sotomayor's algorithm from a single algorithm
by exploiting a lattice property.

The lattice-linear predicate detection method to solve the combinatorial optimization problem is as follows. 
The first step is to define a lattice of vectors $\CC$ such that each vector is {\em assigned} a point in the search space.
For the stable matching problem, the vector corresponds to the assignment of men to women (or equivalently, the choice number for
each man).
For the shortest path problem, the vector assigns a cost to each node.
For the market clearing price problem, the vector assigns a price to each item.
The comparison operation ($\leq$) is defined on the set of vectors such that the least vector, if feasible, is the extremal solution of interest.
For example, in the stable marriage problem if each man orders women according to his preferences and
every man is assigned the first woman in the list, then this solution is the man-optimal solution whenever the assignment is a matching and has no blocking pair.
Similarly, in the shortest path problem and the minimum market clearing price problem,  the zero vector would be optimal if it were feasible.

The second step in our method is to define a boolean predicate $B$ that models feasibility of the vector.
For the stable matching problem, an assignment is feasible iff it is a matching and there is no blocking pair.
For the shortest path problem, an assignment is feasible iff there exists a rooted spanning tree at the source vertex such that
the cost of each vertex is greater than the cost of traversing the path in the rooted tree.
For the minimum market clearing price problem, a price vector is feasible iff it is a market clearing price vector.

The third step is to show that the feasibility predicate is a lattice-linear predicate \cite{chase1998detection}.
Lattice-linearity property allows one to search for a feasible solution efficiently. If any point in the search space is not feasible,
it allows one to make progress towards the optimal feasible solution without any need for exploring
multiple paths in the lattice. Moreover, multiple processes can make progress towards the feasible solution independently.
In a finite distributive lattice, it is clear that the maximum number of such advancement steps before
one finds the optimal solution or reaches the top element of the lattice is equal to the height of the lattice.
Once this step is done, we get the following outcomes. 

First, 
 by applying the lattice-linear predicate detection algorithm to unconstrained
problems, we get
Gale-Shapley algorithm for the stable matching problem, Dijkstra's algorithm for the shortest path problem and Demange, Gale, Sotomayor's algorithm for
the minimum market clearing price. In fact, the lattice-linear predicate detection method yields a parallel version of these algorithms and by restricting these to
their sequential counterparts, we get these classical sequential algorithms.

Second, we get optimal solutions
for the constrained version of each of these problems, whenever the constraints are lattice-linear.
We solve the {\em Constrained Stable Matching Problem} where in addition to men's preferences and women's preferences, there may be a set of
lattice-linear 
constraints.
For example, we may require that Peter's regret  \cite{gusfield1989stable} should be less than that of Paul,
where the {\em regret} of a man in a matching is the choice number he is assigned. 
We note here that some special cases of the constrained stable marriage problems have been studied.
Dias et al \cite{Dias2003,Cseh2016} study the stable marriage problem with restricted pairs. A restricted pair is either a {\em forced} pair which is required to be in the
matching, or a {\em forbidden} pair which must not be in the matching. Both of these constraints are {\em lattice-linear}  and therefore can be modeled in our system.
The constrained shortest path problem asks for a rooted tree at the source node with the smallest cost at each vertex that satisfies additional constraints
of the form ``the cost of reaching node $x$  is at least the cost of reaching node $y$'', 
 ``the cost 
of reaching $x$ must be equal to the cost of reaching $y$'',  and ``the cost of reaching $x$ must be within $\delta$ of the cost of reaching $y$''.
For the market clearing price problem, we consider constraints on the clearing prices of the form that item $i$ must be priced at least as much as item $j$, or
the difference in prices for item $i$ and $j$ must not exceed $\delta$.

Third, by applying a constructive version of Birkhoff's theorem on finite
distributive lattices \cite{Birk3,davey}, we give an algorithm that outputs a succinct representation of all feasible solutions.
In particular, the join-irreducible elements \cite{davey} of the feasible sublattice can be determined efficiently (in polynomial time).
For the constrained stable matching problem, we get a concise representation of all stable matchings
that satisfy given constraints.
Thus, our method yields a more general version of 
rotation posets \cite{gusfield1989stable} to represent all {\em constrained}  stable matchings.
Analogously, we get a concise representation of all constrained integral market clearing price vectors.

The paper is organized as follows.
Section \ref{sec:lattice-linear} defines lattice-linear predicates and gives a simple parallel algorithm called $LLP$ (Lattice-Linear Predicate detection algorithm).
Section \ref{sec:csmp} uses $LLP$ to solve the constrained stable matching problem.
Section \ref{sec:csssp} solves the constrained single source shortest path problem, and
Section \ref{sec:cmcp} solves the constrained market clearing prices problem.
Section \ref{sec:slice} gives a concise representation of all feasible solutions for the constrained stable matching and market clearing price problem.
Finally, Section \ref{sec:conc} gives conclusions and future work.

\section{Lattice-Linear Predicates}
\label{sec:lattice-linear}
Let $\CC$ be the lattice of all $n$-dimensional vectors of reals greater than or equal to zero vector and less than or equal to a given vector $T$
where the order on the vectors is defined by the component-wise natural $\leq$.
The minimum element of this lattice is the zero vector.
The lattice is used to model the search space of the combinatorial optimization problem.
For simplicity, we are considering the lattice of vectors of non-negative reals; later we show that our results are applicable to
any distributive lattice.
The combinatorial optimization problem is modeled as finding the minimum element in $\CC$ that satisfies a boolean {\em predicate} $B$, where
$B$ models {\em feasible} (or acceptable solutions).
We are interested in parallel algorithms to solve the combinatorial optimization problem with $n$ processes.
We will assume that the systems maintains as its state the current candidate vector $\C \in \CC$ in the search lattice, 
where $\C[i]$ is maintained at process $i$. We call $\C$, the global state, and $\C[i]$, the state of process $i$.

Finding an element in lattice that satisfies the given predicate $B$, is called the {\em predicate detection} problem.
Finding the {\em minimum} element that satisfies $B$ (whenever it exists) is the combinatorial optimization problem.
We now define {\em lattice-linearity} which enables efficient computation of this minimum element.
Lattice-linearity is first defined in  \cite{chase1998detection} in the context of detecting global conditions in a distributed system where it is simply called linearity. We use the term
{\em lattice-linearity} to avoid confusion with the standard usage of linearity.

A key concept in deriving an efficient  predicate detection algorithm is that of a {\em
forbidden} state.  
Given a predicate $B$, and a vector $\C \in \CC$, a state $G[i]$ is {\em forbidden} (or equivalently, the index $i$ is forbidden) if 
for any vector $H \in \CC$ , where $G \leq H$, if $H[i]$ equals $\C[i]$, then $B$ is false for $H$.
Formally,
\begin{definition}[Forbidden State  \cite{chase1998detection}]
  Given any distributive lattice $\CC$ of  $n$-dimensional vectors of $\RR_{\ge 0}$, and a predicate $B$, we define
  $ \forbidden(G,i,B) \equiv \forall H \in \CC : G \leq H : (G[i] = H[i]) \Rightarrow
  \neg B(H).$
\end{definition}


We define a predicate $B$ to
be {\em lattice-linear} with respect to a lattice $\CC$
 if for any global state $G$,  $B$ is false in $G$ implies that $G$ contains a
{\em forbidden state}. Formally,
\begin{definition}[lattice-linear Predicate  \cite{chase1998detection}]
A boolean predicate $B$ is {\em {lattice-linear}} with respect to a lattice $\CC$
iff
$\forall G \in \CC: \neg B(G) \Ra (\exists i: \forbidden(G,i,B))$.
\end{definition}

We now give some examples of lattice-linear predicates.
Our first example relates to scheduling of $n$ jobs. Each job $j$ requires time $t_j$ for completion and has a set of
 prerequisite jobs, denoted by $pre(j)$, such that it can be started only after all its prerequisite jobs
have been completed. Our goal is to find the minimum completion time for each job.
We let our lattice $\CC$ be the set of all possible completion times. A completion vector $\C \in \CC$ is feasible iff $B_{jobs}(\C)$ holds where
$B_{jobs}(\C) \equiv \forall j: (\C[j] \geq t_j) \wedge (\forall i \in pre(j): \C[j] \geq \C[i] + t_j)$.
$B_{jobs}$ is lattice-linear because if it is false, then there exists $j$ such that 
either $\C[j] < t_j$ or $\exists i \in pre(j): \C[j] < \C[i]+t_j$. We claim that $\forbidden(\C, i, B_{jobs})$. Indeed, any vector $H \geq \C$ cannot
be feasible with $\C[j]$ equal to $H[j]$. The minimum of all vectors that satisfy feasibility corresponds to the minimum completion time.


Our second example relates to (exclusive) prefix sum of an array $A$ with non-negative reals. We are required to output an array $\C$ such that
$\C[j]$ equals sum of all entries in $A$ from $0$ to $j-1$. We define $\C$ to be feasible iff $B_{prefix}$ holds where
$B_{prefix} \equiv (\forall j > 0): (\C[j] \geq \C[j-1] + A[j-1])$. Again, it is easy to verify that
$B_{prefix}$ is lattice-linear. The minimum vector $\C$ that satisfies $B_{prefix}$ corresponds to the exclusive prefix sum of the array $A$.

As an example of a predicate that is not lattice-linear, consider the predicate $B \equiv \sum_j \C[j] \geq 1$ defined on the space of 
two dimensional vectors. Consider the vector $\C$ equal to $(0,0)$. The vector $\C$ does not satisfy $B$. For $B$ to be lattice-linear
either the first index or the second index should be forbidden. However, 
none of the indices are
forbidden in $(0,0)$. The index $0$ is not
forbidden because the vector $H = (0,1)$ is greater than $\C$, has $H[0]$ equal to $\C[0]$ but it still satisfies $B$.
The index $1$ is also not forbidden
because $H =(1,0)$ is greater than $\C$, has $H[1]$ equal to $\C[1]$ but it satisfies $B$.

The following Lemma is useful in proving lattice-linearity of predicates.
\begin{lemma}\label{lem:lin}
Let $B$ be any boolean predicate defined on a lattice $\CC$ of vectors. \\
(a) Let $f:\CC \ra \RR_{\ge 0}$ be any monotone function  defined on the lattice $\CC$ of vectors of $\RR_{\ge 0}$.
Consider the predicate
$B \equiv \C[i] \geq f(\C)$ for some fixed $i$. Then, $B$ is lattice-linear.\\
(b)  Let $\CB$ be the subset of the lattice $\CC$
of the elements that satisfy $B$.
Then, $B$ is lattice-linear iff $\CB$ is closed under meets.\\
(c) If $B_1$ and $B_2$ are lattice-linear then $B_1 \wedge B_2$ is also lattice-linear.
\end{lemma}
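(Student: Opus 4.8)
The plan is to handle the three parts in the stated order, deriving (a) and (c) directly from the definitions of $\forbidden$ and lattice-linearity, and proving (b) as a genuine equivalence whose backward direction carries all the difficulty. For (a), I would suppose $\neg B(G)$, i.e.\ $G[i] < f(G)$, and claim that the single index $i$ is forbidden. Take any $H \in \CC$ with $G \leq H$ and $H[i] = G[i]$. Monotonicity of $f$ gives $f(G) \leq f(H)$, so $H[i] = G[i] < f(G) \leq f(H)$, whence $\neg B(H)$. This is exactly $\forbidden(G,i,B)$, so $\neg B(G) \Ra (\exists i : \forbidden(G,i,B))$ and $B$ is lattice-linear. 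The only real content is using monotonicity to propagate the failure at $G$ up to every $H$ that freezes coordinate $i$.

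For the forward direction of (b), assume $B$ is lattice-linear and take $G, G' \in \CB$. Suppose for contradiction that $G \wedge G' \notin \CB$. Lattice-linearity supplies an index $i$ with $\forbidden(G \wedge G', i, B)$. Since $(G \wedge G')[i] = \min(G[i], G'[i])$, one of $G, G'$---say $G$---satisfies $G[i] = (G \wedge G')[i]$, and of course $G \geq G \wedge G'$. The forbidden clause instantiated at $H = G$ then forces $\neg B(G)$, contradicting $G \in \CB$. Hence $G \wedge G' \in \CB$ and $\CB$ is closed under meets.

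The backward direction of (b) is where I expect the main obstacle. Assume $\CB$ is closed under meets and $\neg B(G)$, and set $S = \{H \in \CB : G \leq H\}$. If $S = \emptyset$ then every index is vacuously forbidden, so assume $S \neq \emptyset$. Closure under meets makes $S$ downward directed; I want its infimum $m$ to belong to $S$. Granting $m \in S$, we have $m > G$ strictly (as $G \notin \CB$ but $m \in \CB$), so fix a coordinate $i$ with $m[i] > G[i]$. For any $H \geq G$ with $H[i] = G[i]$: were $B(H)$ true, then $H \in S$ and hence $H \geq m$, giving $H[i] \geq m[i] > G[i]$, contradicting $H[i] = G[i]$; so $\neg B(H)$, i.e.\ $i$ is forbidden. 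The delicate point is exactly the claim $m \in S$. For a finite lattice this is immediate, since $S$ is then a finite meet-closed set and therefore contains the meet of all its members. For the continuous lattice of real vectors one needs something more---closure of $\CB$ under directed infima (topological closedness), or reading ``closed under meets'' as closure under arbitrary meets. I would therefore state and use the lemma for finite lattices and add a remark that the continuous case additionally requires $\CB$ to be meet-complete.

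Finally, for (c) I would argue directly rather than through (b). Suppose $\neg (B_1 \wedge B_2)(G)$, so without loss of generality $\neg B_1(G)$. Lattice-linearity of $B_1$ yields $i$ with $\forbidden(G, i, B_1)$, meaning every $H \geq G$ with $H[i] = G[i]$ has $\neg B_1(H)$ and hence $\neg (B_1 \wedge B_2)(H)$; thus $\forbidden(G, i, B_1 \wedge B_2)$ and $B_1 \wedge B_2$ is lattice-linear. Equivalently one can invoke (b): the satisfying set of $B_1 \wedge B_2$ is $L_{B_1} \cap L_{B_2}$, an intersection of two meet-closed sets, which is again meet-closed. Both routes are routine; the direct one has the advantage of not re-importing the completeness caveat from (b).
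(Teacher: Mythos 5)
Your parts (a) and (c), and the forward direction of (b), are correct and essentially the paper's own arguments: (a) is the same monotonicity propagation, your direct proof of (c) is the paper's ``more direct proof'' verbatim in spirit, and your forward direction of (b) is the paper's contrapositive argument specialized cleanly to binary meets.

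The genuine gap is in the backward direction of (b), where you take $m = \inf\{H \in \CB : H \geq G\}$ and need $m \in \CB$, forcing you to restrict the lemma to finite lattices or to assume closure under arbitrary (directed) meets. But the lemma is stated and used on the continuum lattice $\CC$ of vectors in $\RR_{\ge 0}^n$ bounded by $T$, and it is true there assuming only finite meet-closure. The idea you are missing, which is exactly how the paper (following Chase--Garg) argues, is a contrapositive with \emph{one witness per coordinate} rather than an infimum over the whole up-set: if $B$ is not lattice-linear, there is $G$ with $\neg B(G)$ at which no index is forbidden, i.e., for every $i \in \{1,\dots,n\}$ there exists $H_i \geq G$ with $H_i[i] = G[i]$ and $B(H_i)$. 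Because the order is coordinatewise and there are only $n$ coordinates, $\bigwedge_{i=1}^{n} H_i = G$ exactly: the meet is $\geq G$ since each $H_i \geq G$, while its $i$th coordinate is at most $H_i[i] = G[i]$ for each $i$. So $G$ is a \emph{finite} meet of elements of $\CB$ that lies outside $\CB$, contradicting meet-closure---with no completeness or topological closedness of $\CB$ required. What makes the argument work is finiteness of the index set, not finiteness of the lattice; your restriction and accompanying remark are therefore unnecessary, and as written your proposal establishes a strictly weaker statement than the lemma the paper states and relies on (e.g., for $B_{path}$ over real cost vectors).
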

\begin{proof}
(a) Suppose $B$ is false for $\C$. This implies that $\C[i] < f(\C)$. Consider any vector $H \geq \C$ such that $H[i]$ is equal to $\C[i]$.
Since $\C[i] < f(\C)$, we get that $H[i] < f(\C)$. 
The monotonicity of $f$ implies that $H[i] < f(H)$ which shows that $\neg B(H)$.\\
(b) This is shown in \cite{chase1998detection}.  Assume that $B$ is not
  lattice-linear.  This implies that there exists a global state $G$ such that $\neg
  B(G)$, and $\forall i: \exists H_i \geq G: (G[i] = H_i[i])$ and
  $B(H_i)$.  Consider
  $Y = \cup_i \{ H_i \}.$
  All elements
  of $Y \in \CB$. However, $inf~ Y$ which is in $G$ is not an element of
  $\CB$. This implies that $\CB$ is not closed under the infimum operation.
Conversely,
  let $Y= \{H_1, H_2, \ldots, H_k\}$ be any subset of $\CB$ such that its
  infimum $G$ does not belong to $\CB$.
  Since $G$ is the infimum of $Y$, for any $i$, there exists $j\in \{1 \ldots k\}$
  such that $G[i] = H_j[i]$. Since $B(H_j)$ is true for all
  $j$, it follows that there exists a $G$ for which lattice-linearity
  does not hold.

(c) Follows from the equivalence of meet-closed predicates with lattice-linearity and that meet-closed predicates are closed
under conjunction. For a more direct proof, suppose that $\neg (B_1 \wedge B_2)$. This implies that one of the conjuncts is false and therefore from 
the lattice-linearity of that conjunct, a forbidden state exists.
\end{proof}

For the job scheduling example, we can define 
$B_j$ as $\C[j] \geq max ( t_j, max \{ \C[i] + t_j ~|~ i \in pre(j) \})$. Since $f_j(\C) = max ( t_j, max \{ \C[i] + t_j ~|~ i \in pre(j) \})$ is a monotone function,
it follows from Lemma \ref{lem:lin}(a) that $B_j$ is lattice-linear. The predicate $B_{jobs} \equiv \forall j: B_j$ is lattice-linear due to Lemma \ref{lem:lin}(c).
Also note that the problem of finding the minimum vector that satisfies $B_{jobs}$ is well-defined due to Lemma \ref{lem:lin}(b).

We now discuss detection of lattice-linear predicates which requires an additional assumption 
called
the {\em efficient advancement property} \cite{chase1998detection} --- there exists an efficient (polynomial
time) algorithm to determine the forbidden state. 
This property holds for all the problems considered in this paper.  Once we determine $j$ such that $forbidden(G,j,B)$, 
we also need to determine how to advance along index $j$.
To that end, we extend the definition of forbidden as follows.
\begin{definition}[$\alpha$-forbidden]
 Let $B$ be any boolean predicate on the lattice $\CC$ of all assignment vectors.
 For any $\C$, $j$ and positive real $\alpha > \C[j]$, we define $\mbox{forbidden}(\C,j, B, \alpha)$ iff
 $$  \forall H \in \CC:H \geq \C: (H[j] < \alpha) \Ra \neg B(H).  $$
\end{definition}

Given any lattice-linear predicate $B$, suppose $\neg B(\C)$. This means that $\C$ must be advanced on all
indices $j$ such that $\forbidden(\C,j,B)$.  We use a function $\alpha(\C,j, B)$ such that $\forbidden(\C, j, B, \alpha(\C,j, B))$ holds
whenever $\forbidden(\C,j,B)$ is true.  With the notion of $\alpha(\C, j, B)$, we have the algorithm $LLP$ shown in Fig. \ref{fig:alg-llp}.
The algorithm $LLP$ has two inputs --- the predicate $B$ and the top element of the lattice $T$. It returns the least vector $\C$ which is less than or equal to $T$
and satisfies $B$ (if it exists). Whenever $B$ is not true in the current vector $\C$, the algorithm advances on all forbidden indices $j$
in parallel. This simple parallel algorithm can be used to solve a large variety of combinatorial optimization problems
by instantiating different $\forbidden(\C,j,B)$ and $\alpha(\C,j,B)$.

\begin{figure}[htb]
{\small 
\fbox{\begin{minipage}[t]  {2.1in}
\begin{tabbing}
x\=xxxx\=xxxx\=xxxx\=xxxx\=xxxx\= \kill
\>vector {\bf function} getLeastFeasible($T$: vector, $B$: predicate)\\
\> {\bf var} $\C$: vector of reals initially $\forall i: \C[i] = 0$;\\
 \> {\bf while} $\exists j: \forbidden(\C,j,B)$ {\bf do}\\
    \>\>    {\bf for all} $j$ such that $\forbidden(\C,j,B)$  {\bf in parallel}:\\
    \>\>\> {\bf if} $(\alpha(\C,j,B) > T[j])$ then return null; \\
 \>   \>\>    {\bf else} $\C[j] := \alpha(\C,j,B)$;\\
      \> {\bf endwhile};\\
    \> {\bf return} $\C$; // the optimal solution
    \end{tabbing}
\end{minipage}
  }
  }
\caption{Parallel Algorithm  $LLP$ to find the minimum vector less than or equal to $T$ that satisfies $B$\label{fig:alg-llp}}
\end{figure}

\begin{theorem}\label{thm:mainllp}
Suppose there exists a fixed constant $\delta > 0$ such that $\alpha(\C, j, B) - \C[j] \geq \delta$ whenever $\forbidden(\C, j, B)$.
Then, the parallel algorithm $LLP$ finds the least vector $\C \leq T$ that satisfies $B$, if one exists.
\end{theorem}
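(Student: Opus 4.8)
The plan is to prove correctness and termination separately, with the heart of the argument being a monotonicity invariant: at every point in the execution, the current vector $\C$ lies below every feasible vector. Concretely, I would fix an arbitrary feasible $S$, that is, $S \in \CC$ with $B(S)$, and show by induction on the number of advancement steps that $\C \leq S$ is maintained. The base case is immediate, since $\C$ starts at the zero vector. For the inductive step, consider the parallel update that sets $\C[j] := \alpha(\C, j, B)$ for every index $j$ with $\forbidden(\C, j, B)$. Using the induction hypothesis $\C \leq S$ together with $B(S)$, and the defining property of $\alpha$ --- namely that $\forbidden(\C, j, B, \alpha(\C, j, B))$ holds, which asserts that every $H \in \CC$ with $H \geq \C$ and $H[j] < \alpha(\C, j, B)$ falsifies $B$ --- I would conclude that $S[j] \geq \alpha(\C, j, B)$ for each advanced index $j$. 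Indeed, if $S[j]$ were smaller, then $S$ would be an element of $\CC$ above $\C$ with too-small $j$-th coordinate, forcing $\neg B(S)$, a contradiction. Since the advanced coordinates are updated simultaneously and the remaining coordinates are unchanged, the updated $\C$ still satisfies $\C \leq S$, which establishes the invariant.

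Next I would address termination, which is where the constant $\delta$ enters. Every advancement increases some coordinate $\C[j]$ by at least $\delta$, while each coordinate remains in the range $[0, T[j]]$ as long as the algorithm has not returned null (the test $\alpha(\C, j, B) > T[j]$ guards exactly this). Hence coordinate $j$ can be advanced at most $T[j] / \delta$ times, so the total number of advancement steps is bounded by $\sum_j T[j] / \delta$, a finite quantity. Therefore the while-loop halts after finitely many iterations, either by reaching a state with no forbidden index or by returning null.

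Finally I would verify that the halting outcome is correct in both cases. If the loop exits normally and returns $\C$, then no index is forbidden; by the contrapositive of the lattice-linearity of $B$, the assumption $\neg B(\C)$ would force a forbidden index, so $B(\C)$ must hold, and since null was never returned we also have $\C \in \CC$, i.e. $\C \leq T$. The invariant then gives $\C \leq S$ for every feasible $S$, so the feasible vector $\C$ is the least such vector, as claimed. If instead the algorithm returns null, then some forbidden index $j$ satisfied $\alpha(\C, j, B) > T[j]$; I would argue by contradiction that no feasible vector exists, for any feasible $S$ would satisfy $\C \leq S$ by the invariant and $S[j] \geq \alpha(\C, j, B) > T[j]$ by forbiddenness, contradicting $S \in \CC$.

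I expect the main obstacle to be the inductive step of the invariant under parallel updates: one must ensure that advancing several forbidden coordinates at once never overshoots any feasible $S$. This rests on the fact that the bound $S[j] \geq \alpha(\C, j, B)$ is derived from the \emph{old} $\C$ --- valid before any coordinate changes --- and that distinct coordinates are updated independently, so the simultaneous updates cannot interfere with one another. The remaining steps, termination via the $\delta$ bound and the analysis of the two halting cases, are then routine consequences of the definitions and lattice-linearity.
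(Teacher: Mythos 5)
Your proof is correct and follows essentially the same route as the paper: your invariant ``$\C \leq S$ for every feasible $S$'' is precisely the contrapositive of the paper's invariant $(I1)$ (any $V$ with $V[j] < \C[j]$ for some $j$ fails $B$), and your termination bound $\sum_j T[j]/\delta$ and the two-case analysis at exit (normal termination via lattice-linearity, null return via the invariant plus $S \leq T$) match the paper's argument step for step. Your explicit care that the bound $S[j] \geq \alpha(\C,j,B)$ is computed from the \emph{old} $\C$, so simultaneous parallel updates cannot interfere, is a nice touch the paper leaves implicit, but it is a refinement rather than a different approach.
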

\begin{proof}
Since $\C[j]$ increases by at least $\delta$ for at least one forbidden $j$ in every iteration of the {\em while} loop, the algorithm terminates
in at most $\sum_i \lceil T[i]/\delta \rceil$ number of steps.

We show that the algorithm maintains the invariant $(I1)$ that for all indices $j$, any vector $V$ such that $V[j]$ is less than $\C[j]$ cannot satisfy $B$.
Formally, the invariant $(I1)$ is
$$\forall j: (\forall V \in \CC: (V[j] < \C[j]) \Ra \neg B(V)).$$
 Initially, the
invariant holds trivially because $\C$ is initialized to $0$. Suppose $\forbidden(\C, j, B)$.
Then, we increase $\C[j]$ to $\alpha(\C, j, B)$. We need to show that this change maintains the invariant.
Pick any $V$ such that $V[j] < \alpha(\C, j, B)$. We now do a case analysis. If $V \geq \C$, then $\neg B(V)$ holds from the
definition of $\alpha(\C, j, B)$. Otherwise, there exists some $k$ such that $V[k] < \C[k]$. In this case $\neg B(V)$ holds due to $(I1)$.

We can now show Theorem \ref{thm:mainllp} using the invariant.
First, suppose that the algorithm $LLP$ terminates because $\alpha(\C, j, B) > T[j]$. In this case, there is no feasible vector in $\CC$ due to 
the invariant (because the predicate $B$ is false for all values of $\C[j]$). 
Now suppose that the algorithm terminates because there does not exist any $j$ such that $\forbidden(\C, j, B)$. This implies that 
$\C$ satisfies $B$ due to lattice-linearity of $B$. It is also the least vector that satisfies $B$ due to the invariant $(I1)$.
\end{proof}

For the job scheduling example, we get a parallel algorithm to find the minimum completion time
by using $\forbidden(\C,j,B_{jobs}) \equiv (\C[j] < t_j) \vee (\exists i \in pre(j): \C[j] < \C[i] + t_j)$,
and $\alpha(\C, j, B_{jobs}) = \max \{ t_j, \max \{ \C[i]+t_j  | i \in pre(j)\}\}$.

For the prefix sum example, we get a parallel algorithm by using 
$\forbidden(\C,j,B_{prefix}) \equiv  (\C[j] < \C[j-1] + A[j-1])$ and $\alpha(\C,j,B_{prefix}) = \C[j-1] + A[j-1]$ for all $j > 0$.

We now show, on account of Lemma \ref{lem:lin}(c),  that if we have a parallel algorithm for a problem, then we also have one
for the constrained version of that problem.
\begin{lemma}\label{lem:cons}
Let $LLP$ be the parallel algorithm to find the least vector $\C$ that satisfies $B_1$ if one exists. Then, $LLP$ can be adapted to
find the least vector $\C$ that satisfies $B_1 \wedge B_2$ for any lattice-linear predicate $B_2$.
\end{lemma}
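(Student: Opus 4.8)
The plan is to show that the two ingredients $LLP$ needs for the conjunction $B_1 \wedge B_2$ --- an efficiently computable forbidden-state oracle and a valid advancement function $\alpha$ --- can be assembled, at essentially no cost, from the corresponding ingredients for $B_1$ and $B_2$ separately; correctness and termination then drop out of Theorem \ref{thm:mainllp}. First I would record the two facts that make the target well posed: $B_1 \wedge B_2$ is lattice-linear by Lemma \ref{lem:lin}(c), and the least vector satisfying it exists and is unique whenever a feasible vector exists, since by Lemma \ref{lem:lin}(b) both $B_1$ and $B_2$ are meet-closed and hence so is their conjunction.

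The key structural observation is about how forbidden indices compose. If index $j$ is forbidden for $B_1$ at $\C$, it is automatically forbidden for $B_1 \wedge B_2$, because $\neg B_1(H) \Ra \neg (B_1 \wedge B_2)(H)$ for every $H \geq \C$ with $H[j] = \C[j]$; the symmetric statement holds for $B_2$. Conversely, whenever $\neg (B_1 \wedge B_2)(\C)$, at least one conjunct is false, and the lattice-linearity of that conjunct supplies a forbidden index. Hence the forbidden set for the conjunction is exactly the union of the two forbidden sets, and $\forbidden(\C,j,B_1 \wedge B_2)$ is decided simply by consulting the oracles for $\forbidden(\C,j,B_1)$ and $\forbidden(\C,j,B_2)$. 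This is where I would note the one standing assumption: $B_2$ must, like $B_1$, admit the efficient advancement property so that its forbidden oracle and advancement are polynomially computable; as remarked in the text, this holds for every problem treated in the paper.

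Defining the advancement is the step I expect to require the most care. I would set $\alpha(\C,j,B_1 \wedge B_2)$ to be the maximum of $\alpha(\C,j,B_1)$ and $\alpha(\C,j,B_2)$ taken \emph{only} over those conjuncts for which $j$ is actually forbidden at $\C$. One must then verify the $\alpha$-forbidden condition: for every $H \geq \C$ with $H[j]$ below this maximum, $\neg (B_1 \wedge B_2)(H)$ holds. This follows because such an $H[j]$ lies below $\alpha(\C,j,B_k)$ for the conjunct $k$ attaining the maximum, giving $\neg B_k(H)$ and therefore $\neg (B_1 \wedge B_2)(H)$. The hazard to guard against --- and the reason the maximum is restricted to forbidding conjuncts --- is that one cannot use a conjunct's advancement on an index that conjunct does not forbid, since the corresponding $\alpha$ need not be $\alpha$-forbidden there.

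Finally I would discharge the hypothesis of Theorem \ref{thm:mainllp}. If $B_1$ and $B_2$ advance by at least $\delta_1$ and $\delta_2$ respectively on their forbidden indices, then on every index forbidden for $B_1 \wedge B_2$ the combined advancement is at least $\min(\delta_1,\delta_2) > 0$. Theorem \ref{thm:mainllp} then applies verbatim to the adapted oracle and advancement function, so $LLP$ terminates and returns the least vector $\C \leq T$ satisfying $B_1 \wedge B_2$, if one exists, which is precisely the claim.
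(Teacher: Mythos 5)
Your proposal is correct and follows essentially the same route as the paper, whose entire proof consists of the two definitions you derive: $\forbidden(\C,j,B_1 \wedge B_2) \equiv \forbidden(\C,j,B_1) \vee \forbidden(\C,j,B_2)$ and $\alpha(\C,j,B_1 \wedge B_2) = \max\{\alpha(\C,j,B_1), \alpha(\C,j,B_2)\}$, stated with no further verification. Where you go beyond the paper is in two places, both to the good. First, your guarded maximum --- taking the max only over those conjuncts that actually forbid $j$ at $\C$ --- addresses a genuine subtlety that the paper's literal formula glosses over: $\alpha(\C,j,B_k)$ is only guaranteed to satisfy the $\alpha$-forbidden condition when $\forbidden(\C,j,B_k)$ holds, so an unconditional max could overshoot past the least feasible vector (breaking the invariant $(I1)$ used in the proof of Theorem \ref{thm:mainllp}) if the oracle for a non-forbidding conjunct returns a large value. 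The paper's formula is sound only under the implicit convention that $\alpha(\C,j,B_k)$ is consulted, or even defined, only at indices $j$ that $B_k$ forbids (equivalently, that it defaults to at most $\C[j]$ elsewhere); your version makes that convention explicit. Second, you discharge the hypotheses the paper leaves tacit: lattice-linearity of the conjunction via Lemma \ref{lem:lin}(c), well-posedness of the minimum via meet-closedness from Lemma \ref{lem:lin}(b), and the termination bound of Theorem \ref{thm:mainllp} with $\delta = \min(\delta_1,\delta_2) > 0$. There is no gap; your write-up is a more careful rendering of the paper's two-line argument.
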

\begin{proof}
The algorithm $LLP$ can be used with the following changes:
$\forbidden(\C,j ,B_1 \wedge B_2) \equiv \forbidden(\C,j ,B_1) \vee \forbidden(\C,j, B_2),$ and\\
$\alpha(\C, j, B_1 \wedge B_2) = \max \{\alpha(\C,  j, B_1), \alpha(\C, j,  B_2) \}$.\\
\end{proof}

For example, suppose that we want the minimum completion time of jobs with the additional lattice-linear constraint that $B_2(\C) \equiv (\C[1]=\C[2])$.
$B_2$ is lattice-linear with $\forbidden(\C, 1, B_2) \equiv (\C[1] < \C[2])$ and $\forbidden(\C, 2, B_2) \equiv (\C[2] < \C[1])$. By applying, Lemma \ref{lem:cons},
we get a parallel algorithm for the constrained version.

The job scheduling problem and the prefix sum problem are special cases of the following optimization problem: minimize $\C$ such that $\forall i: \C_i \geq f_i(\C)$ 
where each of $f_i$ is a monotone function on the lattice of reals. When the predicate $B$ is of the form $\forall i: \C_i \geq f_i(\C)$,  the problem is closely related to finding the least fixed point in a lattice using Knaster-Tarski's fixed point theorem \cite{tarski:fp,lassez:fp}. 

The algorithm in Fig. \ref{fig:alg-llp}
can be viewed as repeated iteration of a monotone function on the bottom element of a lattice similar to a constructive version 
of Knaster-Tarski's theorem.
In particular, let $\CC$ be any lattice with the bottom element $\bot$, the top element $\top$ and $f$ a monotone function from $\CC$ to $\RR_{\geq 0}$. Consider the set $F \subseteq \CC$ defined as $\{x ~|~x \geq  f(x) \}$.
If there exists $k$ such that $f^k(\bot) = f^{k+1}(\bot)$, then $x^* = f^k(\bot)$ is the minimum element in $\CC$ that satisfies the predicate $(x \geq f(x))$.
Our work differs from such earlier work in many respects. First, $B$ may not have the form $\forall i: \C_i \geq f_i(\C)$; instead we only require $B$ to 
be closed under meets. Second,  Knaster-Tarski's fixed point theorem (and many variants) requires the function to be from the lattice $\CC$ to itself.
In that case, the solution to the equation $x \geq f(x)$ always exists for a complete lattice because $\top \geq f(\top)$. We do not assume that the range of the function is the 
lattice itself. Therefore, there is no guarantee of the existence of the fixed point. Indeed, for the job scheduling example, if the prerequisites have a cycle
and weights are positive, then there is no solution and the algorithm $LLP$ returns null.
Third, the goal of this paper is to develop techniques to reach the fixed point with an efficient parallel algorithm and to
show that many standard and non-standard parallel algorithms for combinatorial optimization can be derived 
in this framework.


Note that the straightforward application of $LLP$ may not give the most time-efficient parallel algorithm. The efficiency of
the algorithm may depend upon $\alpha(\C,j, B)$ chosen for the predicate $B$. We will later show such optimizations for the shortest path algorithm.

\section{Constrained Stable Matching Problem}
\label{sec:csmp}
We now show that our technique is applicable when the search space is any distributive lattice rather than 
lattice of non-negative real vectors. We illustrate this by applying predicate detection to the stable matching problem.
Traditionally, a stable matching problem is modeled as a bipartite graph consisting of two sets of vertices for men and
women. We use a different model to exploit predicate detection techniques --- instead of the underlying set of vertices being men and women, it is the set 
of proposals $E$ that can be made by men to women. We call these proposals, {\em events},
which are executed by
$n$ processes corresponding to $n$ men denoted by $\{P_1 \ldots P_n\}$.  
An event is denoted by a tuple $(i,j)$ that corresponds to
the proposal made by man $i$  to woman $j$. Women are also numbered $1..n$ and
denoted by $\{w_1 \ldots w_n\}$.
We now impose a partial order $\ra_p$ on $E$  to model the order in which 
these proposals can be made. In the standard stable matching problem (SMP), every man $P_i$ has his preference list $mpref[i]$ such that
$mpref[i][k]$ gives the $k^{th}$ most preferred woman for $P_i$. We model $mpref$ using
$\ra_p$ order on the set of events. If $P_i$ prefers woman $j$ to woman $k$, then 
there is an edge from the event $(i,j)$ to the event $(i,k)$. As in SMP, we assume that 
every man gives a total order on all women. 
Each process makes proposals to women in the decreasing order of preferences (similar to the Gale-Shapley algorithm).

In the standard stable matching problem, there are no constraints
on the order of proposals made by different men, and $\ra_p$ can be visualized as a partial order $(E, \ra_p)$ with
$n$ disjoint chains.
In the constrained SMP, $\ra_p$  can relate proposals made by different men
and therefore $\ra_p$ forms a general poset $(E, \ra_p)$.
For example, the constraint that Peter's regret is less than or equal to John can be modeled by adding $\ra_p$ edges
as follows. For any regret $r$, we add an $\ra_p$ edge from the proposal by John with regret $r$ to the
proposal by Peter with regret $r$.
The constraint that Peter and John have the same regret can be modeled as a conjunction of Peter's regret being less than that 
of John and vice-versa.
We draw $\ra_p$ edges in solid (blue) edges as shown in Fig. \ref{fig:csmp-model}.




Let $G \subseteq E$ denote the  global state of the system. A global state $G$ is simply the subset of events executed in the computation
such that it preserves the order of events within each $P_i$. 
Since all events executed by a process $P_i$ are totally ordered,
 it is sufficient to record the number of events executed 
by each process in a global state. 
Let $G[i]$ be the number of proposal made by $P_i$. 
Initially, $G[i]$ is $0$ for all men.
If $P_i$ has made $G[i]>0$ proposals, then 
$mpref[i][G[i]]$ gives the identity of the woman last proposed by $P_i$. 
We let $event(i, G[i])$ denote the event in which $P_i$ makes a proposal to $mpref[i][G[i]]$.
We also use $succ(event(i, G[i]))$ to denote the next proposal made by $P_i$, if any.


For the constrained SMP, we have $\ra_p$ edges that relate proposals of different processes.
The second graph in Fig. \ref{fig:csmp-model} shows an example of using $\ra_p$ edges in the constrained SMP.
For this problem, we work with {\em consistent global states} (or order ideals \cite{davey}).
\begin{definition}[Consistent Global State]
A global state $G \subseteq E$ is {\em consistent} if
$ \forall e,f \in E: (e \ra_p f) \wedge (f \in G) \Ra (e \in G).$
\end{definition}
In the context of constrained SMP, it is easy to verify that $G$ is consistent iff 
for all $j$, there does not exist $i$ such that $succ(event(j, G[j])) \ra_p event(i, G[i])$
(otherwise, by letting $e = succ(event(j,G[j])$ and $f = event(i, G[i])$, we get a contradiction to consistency).


It is well known that the set of all consistent global states of a finite poset forms a finite
distributive lattice \cite{davey}. We use the lattice of all consistent global states as $\CC$ for
the predicate detection.

In the standard SMP, women's preferences
are specified by preference lists $wpref$ such that $wpref[i][k]$ gives the $k^{th}$ most preferred man for woman $i$.
It is also convenient to define $rank$ such that $rank[i][j]$ gives the choice number $k$ for which $wpref[i][k]$ equals $j$, i.e.,
$wpref[i][k] = j$ iff $rank[i][j] = k$.
We model these preferences using edges on the computation graph as follows. If an event $e$ 
corresponds to a proposal by $P_i$ to woman $q$ and she prefers $P_j$, then we add a dashed (green) edge
from $e$ to the event $f$ that corresponds to $P_j$ proposing to woman $q$.
The set $E$ along with the dashed edges also forms a partial order $(E, \ra_w)$ where 
$e \ra_w f$ iff both proposals are to the same woman and that woman prefers the proposal
$f$ to $e$. 
It is important to note that the definition of a consistent global state only refers to $\ra_p$.
The relation $\ra_w$ is used only to define the feasible predicate later.
With $((E, \ra_p), \ra_w)$ we can model any SMP specified using $mpref$ and $wpref$ as shown in Fig. \ref{fig:SMP}.
The first graph in Figure \ref{fig:csmp-model} gives an example of a standard SMP.
To avoid cluttering the
figure, we have shown preferences of all men but preferences of only the woman $w_1$.
The woman $w_1$ has highest preference for $P_4$, followed by $P_1$, $P_3$ and $P_2$.
Note that an arrow in $\ra_p$ for the same man goes from 
the more preferred  woman to the less preferred woman, but an arrow in $\ra_w$ for the same woman
goes from the less preferred man to the more preferred man.

The second graph in Fig \ref{fig:csmp-model} gives a constrained SMP. Since both $\ra_p$ and $\ra_w$ are
transitive relations, we draw only the transitively reduced diagrams.

\begin{small}
\begin{figure}
\begin{tabular}{l | l l l l |   l l   l | l l l l |}
$mpref$ & & & &  & & & $wpref$ \\
$P_1$  &   $w_4$ & $w_1$ & $w_2$ & $w_3$  &  & & $w_1$ & $P_4$ & $P_1$ & $P_3$ & $P_2$\\
$P_2$   &   $w_2$ & $w_3$ & $w_1$ & $w_4$ &  & & $w_2$ & $P_1$ & $P_4$ & $P_2$ & $P_3$\\
$P_3$   &   $w_3$ & $w_1$ & $w_4$ & $w_2$ &  & & $w_3$ & $P_1$ & $P_2$ & $P_4$ & $P_3$\\
$P_4$   &   $w_2$ & $w_4$ & $w_3$ & $w_1$ &  & & $w_4$ & $P_3$ & $P_1$ & $P_4$ & $P_2$\\
\end{tabular}
\caption{\label{fig:SMP}  Stable Matching Problem with men preference list ($mpref$) and
women preference list ($wpref$).}
\vspace*{-0.1in}
\end{figure}
\end{small}

\begin{figure}[htbp]
\input{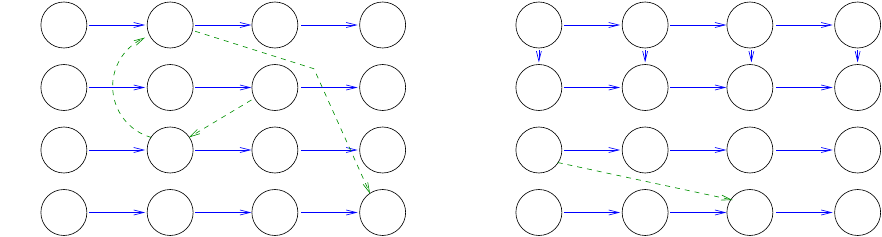_t}
\caption{\label{fig:csmp-model} 
The first graph models the standard SMP problem. Men's preferences are shown in blue solid edges. Preferences of women 1 and 2  are shown in dashed green edges. The second
 constrained SMP Graph corresponds to constraint that the {\em regret} for $P_2$ is less than or equal to that of $P_1$.
It also shows the preference of $w_3$ of $P_4$ over $P_3$.}
\end{figure}


The above discussion motivates the following definition.
\begin{definition}[Constrained SMP Graph]
Let $E = \{(i,j) | i \in [1..n] \mbox{ and } j \in [1..n] \}$.
A Constrained SMP Graph $((E, \ra_p), \ra_w)$ is a directed graph on $E$ with two sets of edges $\ra_p$ and $\ra_w$ with the following
properties: (1) $(E, \ra_p)$ is a poset such that the set $P_i = \{ (i,j) | j \in [1..n] \}$ is a chain for all $i$, and
(2) $(E, \ra_w)$ is a poset such that the set $Q_j = \{ (i,j) | i \in [1..n] \}$ is a chain for all $j$  and there is no $\ra_w$ edge
between proposals to different women, i.e.,
for all $i,j,k,l: (i,j) \ra_w (k,l) \Ra (j=l)$.
\end{definition}

Given a global state $G$, we define the {\em frontier} of $G$ as the set of maximal events executed by 
any process. The frontier includes only the last event executed by $P_i$ (if any). Formally,
$\frontier(G) = \{ e \in G ~|~ \forall f \in G$ such that  $f \neq e$, $f$ and $e$ are executed by $P_i$ implies $f \ra_p e$ \}.
We call the events in $G$ that are not in $\frontier(G)$ as pre-frontier events.
For example, suppose that $G = \{(P_1, w_4), (P_1, w_1), (P_2, w_2), (P_3, w_3), (P_3, w_1), (P_4, w_2)\}$ in Fig. \ref{fig:csmp-model}.
Then, $$\frontier(G) = \{ (P_1, w_1), (P_2, w_2), (P_3, w_1), (P_4, w_2) \}$$ and 
$\mbox{pre-frontier}(G) = \{ (P_1, w_4), (P_3, w_3) \}$.


We now define the feasible predicate on  global states as follows.
\begin{definition}[feasibility for marriage]
A global state $G$ is  feasible for marriage iff (1) every man has made at least one proposal
(2)  $G$ is a consistent global state, and (3) there is no green edge ($\ra_w$) from a frontier event to any event of $G$ (frontier or pre-frontier).
Formally, $B_{marriage}(G) \equiv$\\
$ (\forall i: G[i] > 0) \wedge consistent(G) \wedge  (\forall e \in \frontier(G), \forall g \in G: \neg (e \ra_w g).$

\end{definition}
A green edge from a frontier event $(i,j)$ to another frontier event $(k,l)$ would imply that $j$ is equal to $l$
and both men $i$ and $k$ are assigned the same woman which violates the matching property.
For example, in Fig. \ref{fig:csmp-model}, $(P_3,w_3)$ and $(P_4, w_3)$ cannot both be frontier events of an feasible global state.
A green edge from a frontier event $(i,j)$ to a pre-frontier event $(k,l)$ in a global state $G$ implies that woman $j$ prefers man $k$ to her partner $i$ and man $k$ prefers $j$ to his partner in $G$.
For example, in Fig. \ref{fig:csmp-model}, $(P_3,w_3)$ and $(P_4, w_1)$ cannot both be frontier events of an feasible global state because there is
a green edge from $(P_3, w_3)$ to a pre-frontier event $(P_4, w_3)$.

It is easy to verify that the problem of finding a stable matching is the same as finding a global state that 
satisfies the predicate $B_{marriage}$ which is defined purely in graph-theoretic terms on the
constrained SMP graph.
The next task is to show that $B_{marriage}$ is lattice-linear.

\begin{theorem}\label{lem:CSMP-forbid}
For any global state $G$ that is not a constrained stable matching,
there exists a $j$ such that $\forbidden(G,j,B_{marriage})$.
\end{theorem}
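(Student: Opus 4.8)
The plan is to prove the claim---which is exactly the assertion that $B_{marriage}$ is lattice-linear on $\CC$---directly from the definition of a forbidden index. Assume $\neg B_{marriage}(G)$. Since the working lattice $\CC$ consists only of consistent global states, the conjunct $consistent(G)$ holds automatically, so the failure of $B_{marriage}(G)$ must come from either the requirement that every man has proposed or the requirement that no frontier event has an outgoing green edge into $G$. I would treat these two cases separately, in each naming an explicit index $j$ and checking $\forbidden(G,j,B_{marriage})$ against its definition $\forall H \in \CC: G \leq H: (G[j]=H[j]) \Ra \neg B_{marriage}(H)$.

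In the first case some man has not yet proposed, i.e.\ $G[i]=0$ for some $i$, and I would take $j=i$. For any $H \in \CC$ with $G \leq H$ and $H[i]=G[i]=0$, man $i$ has still made no proposal, so the conjunct $(\forall k: H[k]>0)$ fails and $\neg B_{marriage}(H)$; hence $i$ is forbidden. This case is immediate.

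In the second case there exist $e \in \frontier(G)$ and $g \in G$ with $e \ra_w g$; writing $e = event(i,G[i])$, I would again take $j=i$. The verification rests on two observations. First, for any $H \geq G$ with $H[i]=G[i]$, process $P_i$ executes exactly the same proposals in $H$ as in $G$---namely the initial segment $event(i,1),\ldots,event(i,G[i])$ of its chain---so $e$ is still $P_i$'s last event and therefore $e \in \frontier(H)$. Second, $G \leq H$ means $G \subseteq H$ as order ideals, so the witness $g$ remains in $H$. Since $\ra_w$ is a fixed relation, $e \ra_w g$ continues to hold with $e \in \frontier(H)$ and $g \in H$, so the green-edge conjunct fails for $H$ and $\neg B_{marriage}(H)$. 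Thus $i$ is forbidden.

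I expect the second case to be the crux, since it is where the choice of forbidden index is nonobvious and where one must argue that the frontier event of $P_i$ is invariant across all extensions that freeze coordinate $i$. The underlying intuition is precisely that of Gale--Shapley: when $P_i$'s current proposal goes to a woman who already holds a proposal from a man she prefers, $P_i$ can never be matched there and must advance to his next choice, so keeping $H[i]$ fixed permanently blocks feasibility. Combining the two cases yields an index $j$ with $\forbidden(G,j,B_{marriage})$ whenever $G$ is not a constrained stable matching, which is the desired conclusion.
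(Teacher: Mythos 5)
Your two cases are correct and essentially identical to the paper's treatment of them, but there is a genuine gap: you dispose of the consistency conjunct by asserting that $\CC$ contains only consistent global states, and that reading conflicts with the setup the theorem is meant to serve. The predicate $B_{marriage}$ explicitly includes $consistent(G)$ as a conjunct --- which would be vacuous under your assumption --- and the implementation of $\forbidden(G,j,B_{marriage})$ in Fig.~\ref{fig:alg} contains the third disjunct $\exists i: succ(event(j,G[j])) \ra_p event(i,G[i])$, which is exactly the test that $G$ violates consistency at index $j$. The reason the search space must admit inconsistent states is that the $LLP$ algorithm advances one coordinate at a time through the full product lattice of proposal-count vectors; when $\ra_p$ relates proposals of \emph{different} men (the external constraints), intermediate vectors routinely fail to be order ideals, and lattice-linearity must hold on that larger lattice for Theorem~\ref{thm:mainllp} to justify the algorithm. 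If you instead restrict $\CC$ to consistent states, you also owe an argument that each advancement step stays inside $\CC$, which the algorithm does not guarantee.

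The missing case is handled as in the paper's proof: if $G$ is inconsistent, there is an event $e \not\in G$, say on $P_j$, with $e \ra_p f$ for some $f \in G$; take that $j$. For any $H \in \CC$ with $G \leq H$ and $H[j] = G[j]$, process $P_j$ has executed the same initial segment of its chain in $H$ as in $G$, so $e \not\in H$, while $f \in G \subseteq H$; hence $H$ is inconsistent and $\neg B_{marriage}(H)$, giving $\forbidden(G,j,B_{marriage})$. With this third case added, your argument --- including the correct key observation in the green-edge case that freezing coordinate $i$ keeps $event(i,G[i])$ on the frontier of every extension $H$ while the witness $g$ persists in $H$ --- becomes a complete proof matching the paper's.
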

\begin{proof}
First suppose that some process $j$ has not made any proposal (i.e., $G[j] = 0$), then clearly
unless process $j$ is advanced, the predicate $B_{marriage}$ cannot hold. Now suppose that $G$ is not 
a consistent global state. This means that there exists an event $e$, say on $P_j$, such that $e \ra_p f$ for
some event $f \in G$, but $e \not \in G$. Again, unless $G$ is advanced on $P_j$, the global state cannot become consistent
and hence $j$ is forbidden in $G$.
Finally, suppose that there exists a green edge from a frontier event of $G$, $G[j]$ to another
event $g \in G$. Consider $H$ such that $G \subseteq H$ and $G[j] = H[j]$. We now have that $H[j]$ is a frontier event
and it has a green edge to $g$. Since $g$ is an event in $G$, it is also included 
in all global states greater than $G$. Hence, $H$ is also not feasible. Therefore, $\forbidden(G, j, B)$.
\end{proof}

We now apply the detection of lattice-linear global predicates for the constrained stable matching. 

\begin{figure}[htb]\begin{center}
\fbox{\begin{minipage}  {\textwidth}\sf
\begin{tabbing}
xx\=xxxx\=xxxx\=xxxx\=xxxx\=xxxx\= \kill
{\bf Algorithm Constrained-Stable-Matching}:\\ 
Use Algorithm $LLP$ where\\
$T$ = $(n,n,...,n)$; //maximum number of proposals at $P_i$\\
$z = mpref[j][G[j]]$; //current woman assigned to man $j$\\
\\
$\forbidden(G, j, B_{marriage}) \equiv   (G[j] = 0)$ \\
 \>  $\vee (\exists i: \exists k \leq G[i]: (z = mpref[i][k])$\\ 
 \> \> $\wedge (rank[z][i] < rank[z][j]))$\\
 \> $\vee (\exists i: succ(event(j, G[j])) \ra_p event(i, G[i]]))$ \\
\\
$\alpha(G,j,B_{marriage}) = (G[j]+1)$;
\end{tabbing}
\end{minipage}
} 
\end{center}
\caption{An efficient algorithm to find the man-optimal constrained stable matching less than or equal to $T$ \label{fig:alg}}
\end{figure}

%
%

The algorithm to find the man-optimal constrained stable marriage is shown in Fig. \ref{fig:alg}.
From the proof of Theorem \ref{lem:CSMP-forbid}, we get the
following implementation of $\forbidden(G, j, B_{marriage})$ in Fig. \ref{fig:alg}.
The first disjunct in the function $\forbidden$ holds when the assignment for $G$ is null with respect to $j$.
The second disjunct holds when the woman $z$ assigned to man $j$ is such that there exists a man $i$
who is either (1) currently assigned to $z$ and woman $z$ prefers man $i$ , or (2) is currently assigned
to another woman but he prefers $z$ to the current assignment. The first case holds when $k=G[i]$ and the
second case holds when $k < G[i]$.
The first case is equivalent to checking is a green edge exists from $(j, z)$ to a frontier event.
The second case is equivalent to checking if a green edge exists to a pre-frontier event.
The third disjunct checks that the assignment for $G$ satisfies all external constraints with respect to $j$.

Our algorithm generalizes the Gale-Shapley algorithm in that it allows specification of external constraints.
It is easy to see that  the third disjunct ($\exists i:succ(event(j, G[j]))  \ra_p event(i,G[i]))$ is not necessary when there are no
external constraints. It can also be shown that the second disjunct can be 
simplified to  $(\exists i: (z = mpref[i][G[i]]) \wedge (rank[z][i] < rank[z][j]))$ in the absence of external constraints.
On removing the third disjunct and simplifying the second disjunct, we get the Gale-Shapley algorithm as a special case of our algorithm. 
In every iteration of $LLP$, we need to find $j$ such that $\forbidden(G,j,B)$ holds.
By keeping a list of $j$ such that $G[j]$ is equal to $0$, finding $j$ that satisfies the first disjunct
can be done in $O(1)$ time. 
By keeping the rank of the most preferred man that has proposed to any woman with the woman, and the list of men that have been rejected,
finding $j$ that satisfies the second disjunct can also be done in $O(1)$ time.
For the standard matching problem, these two disjuncts are sufficient and we get the overall time complexity of $O(n^2)$ which is identical to
that of the Gale-Shapley algorithm. 


We note here that existence of a stable matching via Tarski's theorem is shown by Adachi \cite{adachi2000characterization}.
Our interest is not in the existence (and occasionally there may not be any solution to the constrained stable marriage problem), but
the computation of the constrained stable marriage whenever it exists.

\section{Constrained Single Source Shortest Path Algorithm}
\label{sec:csssp}

In this section, we first model the traditional shortest path problem and later generalize it to incorporate constraints.
Consider a weighted directed graph with $n$ vertices numbered $0$ to $n-1$.  We assume that all edge weights are {\em strictly positive}.
We are required to find the minimum cost of a path from a distinguished {\em source} vertex $v_0$ 
to all other vertices where the cost of a path is defined as the sum of edge weights along that path.
For any vertex $v$, let $pre(v)$ be the set of vertices $u$ such that  $(u,v)$ is an edge in the graph.
To avoid trivialities, assume that every vertex  $v$ (except possibly the source vertex $v_0$) has nonempty $pre(v)$.

As the first step of the predicate detection algorithm, we define the lattice for the search space. We assign to each vertex $v_i$, $\C[i] \in \RR_{\geq 0}$ with the interpretation that
$\C[i]$ is the cost of reaching vertex $v_i$. 
We call $\C$, the {\em assignment} vector.
The invariant maintained by our algorithm is:
for all $i$, the cost of any path from $v_0$ to $v_i$ is greater than or equal to $\C[i]$.

\begin{figure}[htbp]
\begin{center}
\input{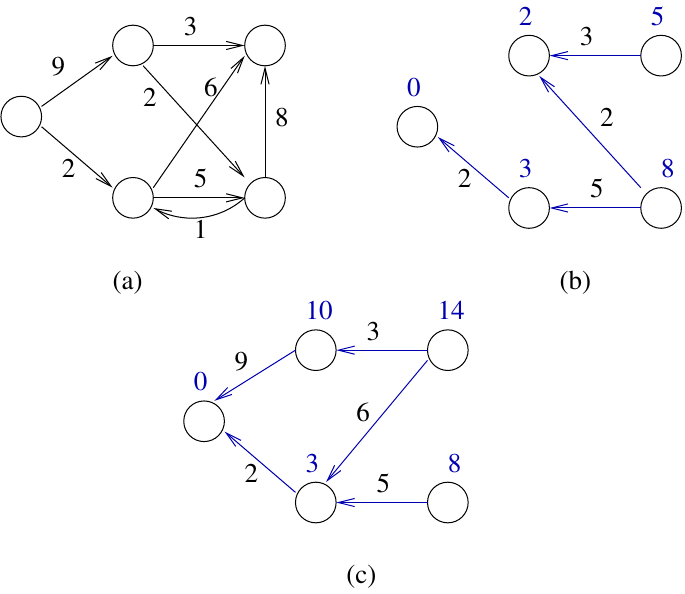_t}
\caption{\label{fig:myGraph} (a) A Weighted Directed Graph (b) The parent structure for $\C=(0,2,3,5,8)$ (c) The parent structure for $\C=(0,10,3,14,8)$.
Since every non-source node has at least one parent, $\C$ is feasible. }
\end{center}
\vspace*{-0.2in}
\end{figure}

The second step in the predicate detection algorithm is to define an appropriate {\em feasibility} predicate.
The vector $\C$ only gives the lower bound on the cost of a path and there may not be any 
path to vertex $v_i$ with cost $\C[i]$. To capture that an assignment is feasible, we define
{\em feasibility} which requires the notion of a {\em parent}. We say that $v_i$ is a parent of $v_j$ in $\C$ (denoted by the predicate $parent(j, i, \C)$)  iff
there is a direct edge from $v_i$ to $v_j$ and $\C[j]$ is at least $(\C[i]+w[i,j])$, i.e.,
$(i \in pre(j)) \wedge (\C[j] \geq \C[i] + w[i,j])$. 
In Fig. \ref{fig:myGraph}, let $\C$ be the vector $(0, 2, 3, 5, 8)$. Then, $v_0$ is a parent of $v_2$ because $\C[2]$ is
greater than $\C[0]$ plus $w[0,2] (i.e., 3 \geq 0 + 2)$. Similarly, $v_1$ is a parent of $v_4$ because $\C[4] \geq \C[1] + 2$.
A node may have multiple parents. The node $v_2$ is also a parent of $v_4$ because $\C[4] \geq \C[2] + 5$.
%
Since $w[i,j]$ are strictly positive, there cannot be a cycle in the parent relation.
Now, feasibility can be defined as follows.

\begin{definition}[feasible for paths]
An assignment $\C$ is {\em feasible for paths } 
 iff every node except the source node has a parent. Formally,
$B_{path}(\C) \equiv \forall j  \neq 0:  (\exists i: parent(j, i, \C)).$

\end{definition}
Hence, an assignment $\C$ is feasible iff one can go from any non-source node to the source node by following any of the
parent edges. In Fig. \ref{fig:myGraph}, the vector $\C = (0, 10, 3, 14, 8)$ is feasible because every non-source node has at least one parent.
One can go from $v_3$ to $v_0$ either via $v_1$ or $v_2$.

As the third step of our method, we now show that feasibility satisfies lattice-linearity.

\begin{lemma}\label{lem:CSSP-forbid}
For any assignment vector $\C$ that is not feasible,
$\exists j: \forbidden(\C,j, B_{path}(\C))$.
\end{lemma}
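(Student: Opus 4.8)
The plan is to show directly that $B_{path}$ is lattice-linear by exhibiting, for each infeasible assignment, an explicit forbidden index. First I would unfold $\neg B_{path}(\C)$: infeasibility means some non-source node $j \neq 0$ has no parent in $\C$, i.e.\ $\forall i \in pre(j): \C[j] < \C[i] + w[i,j]$. I claim this $j$ witnesses $\forbidden(\C, j, B_{path})$.

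To verify this, I would take an arbitrary $H \in \CC$ with $H \geq \C$ and $H[j] = \C[j]$ and argue $\neg B_{path}(H)$. The key step is monotonicity in the coordinates other than $j$: for every $i \in pre(j)$ we have $H[i] \geq \C[i]$, hence $H[i] + w[i,j] \geq \C[i] + w[i,j] > \C[j] = H[j]$. Thus $H[j] < H[i] + w[i,j]$ for all $i \in pre(j)$, so $j$ still has no parent in $H$; since $j \neq 0$, this falsifies the conjunct $\exists i: parent(j,i,H)$ and therefore $B_{path}(H)$. This establishes that $j$ is forbidden, which is exactly the assertion of the lemma.

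More economically, I would note that feasibility of node $j$ is precisely the predicate $\C[j] \geq f_j(\C)$ with $f_j(\C) = \min_{i \in pre(j)}(\C[i] + w[i,j])$, where the minimum is over the nonempty set $pre(j)$ (nonempty for every non-source node by assumption). Each $\C[i] + w[i,j]$ is monotone in $\C$ and the pointwise minimum of monotone functions is monotone, so $f_j$ is monotone; Lemma \ref{lem:lin}(a) then yields lattice-linearity of $B_j \equiv \C[j] \geq f_j(\C)$, and Lemma \ref{lem:lin}(c) lifts this to $B_{path} \equiv \forall j \neq 0: B_j$. I expect no genuine obstacle here: the only point needing care is that the coordinate held fixed is $j$ itself, so the inequality $\C[j] < \C[i] + w[i,j]$ survives precisely because only its right-hand side can grow as we pass from $\C$ to $H$. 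Strict positivity of the weights is not needed for lattice-linearity --- it serves elsewhere to forbid cycles in the parent relation --- so I would avoid invoking it in this argument.
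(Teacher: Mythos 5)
Your main argument is correct and coincides with the paper's own proof: pick a parentless node $j \neq 0$ and observe that for any $H \geq \C$ with $H[j] = \C[j]$, the right-hand sides $\C[i] + w[i,j]$ can only grow, so $j$ remains parentless in $H$ and $\neg B_{path}(H)$. You in fact slightly streamline the paper's version: the paper selects, among all parentless nodes, one with the \emph{least} value of $\C[j]$, but that minimality is never invoked in its verification (it becomes relevant only for the larger advancement threshold $\beta(\C)$ in Lemma \ref{lem:adv-beta}), so your choice of an arbitrary parentless $j$ loses nothing. Your second, more economical route is a genuinely different decomposition the paper does not use for this lemma (though it uses it for the job-scheduling example): writing node-$j$ feasibility as $\C[j] \geq f_j(\C)$ with $f_j(\C) = \min_{i \in pre(j)} (\C[i] + w[i,j])$, noting that a pointwise minimum of monotone functions is monotone, and invoking Lemma \ref{lem:lin}(a) for each $B_j$ and Lemma \ref{lem:lin}(c) for the conjunction over $j \neq 0$. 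This buys modularity --- lattice-linearity and, via Lemma \ref{lem:lin}(b), meet-closedness of the feasible set come for free --- at the small cost of the side condition $pre(j) \neq \emptyset$ for $j \neq 0$, which you correctly discharge using the paper's standing assumption. Your closing observation is also accurate: strict positivity of the weights is irrelevant to lattice-linearity of $B_{path}$ and is needed only to make the parent relation acyclic (e.g., for the equivalence with $B_{rooted}$).
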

\begin{proof}
Suppose $\C$ is not feasible. Then, 
there exists $j \neq v_0$ such that $v_j$ does not have a parent, i.e., 
$ \forall i \in pre(j): \C[j] < \C[i] + w[i,j]. $ Of all nodes that do not have any parent, let
$j$ be such that  it has the least value of $\C[j]$.
If there are multiple $j$ with the same value of $\C[j]$, then we choose any $j$.
We show that $\forbidden(\C, j, B_{path}(\C))$ holds. 
Pick any $H$ such that $H \geq \C$. Since for any $i \in pre(j)$, $H[i] \geq \C[i]$, 
$\C[j] < \C[i] + w[i,j]$ implies that $\C[j] < H[i]+w[i,j]$. Therefore, whenever $H[j] = \C[j]$,
$v_j$ does not have a parent.
%
%
%
\end{proof}

{\bf Remark.} Since $B_{path}$ is a lattice-linear predicate, it follows from Lemma \ref{lem:lin}(c), that the set of feasible assignment vectors
are closed under meets (the component-wise min operation). Unlike the constrained stable matching problem, the set of feasible assignment vectors are not closed under the join operation.
In Fig. \ref{fig:myGraph}, the vectors $(0, 10, 3, 14, 8)$ and $(0, 9, 10, 12, 11)$ are feasible, but their join $(0, 10, 10, 14, 11)$ is not feasible.

\begin{lemma}\label{lem:adv-alpha}
Suppose $\neg B_{path}(\C)$.
Then, $(\forall i: \neg parent(j, i, \C)) \Ra \forbidden(\C, j, B_{path}, \alpha(\C, j) )$ for any $j \neq 0$, where
$ \alpha(\C,j, B_{path}) =  \min \{ \C[i] + w[i,j] ~|~ i \in pre(j)\} .$
\end{lemma}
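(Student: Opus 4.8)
The plan is to unfold the definition of $\alpha$-forbidden and verify it by a direct inequality chain. By definition, $\forbidden(\C, j, B_{path}, \alpha)$ asserts that every $H \geq \C$ with $H[j] < \alpha$ satisfies $\neg B_{path}(H)$, and the definition carries the standing precondition $\alpha > \C[j]$. I would first discharge this precondition: the hypothesis $\forall i: \neg parent(j, i, \C)$ unfolds to $\C[j] < \C[i] + w[i,j]$ for every $i \in pre(j)$, so $\C[j]$ lies strictly below each term of the minimum and hence $\C[j] < \alpha$, as required.

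For the main claim I would fix an arbitrary $H \geq \C$ with $H[j] < \alpha$ and show that $v_j$ has no parent in $H$; since $j \neq 0$, this forces $\neg B_{path}(H)$ immediately from the definition of $B_{path}$. The core is a single chain of inequalities: for each $i \in pre(j)$, monotonicity $H[i] \geq \C[i]$ (from $H \geq \C$) yields $H[i] + w[i,j] \geq \C[i] + w[i,j] \geq \alpha > H[j]$, where the middle inequality is just that $\alpha$ is the minimum of the values $\C[i] + w[i,j]$. Thus $H[j] < H[i] + w[i,j]$, i.e., $\neg parent(j, i, H)$, for every $i \in pre(j)$, completing the argument. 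Notice the chain does not even invoke the hypothesis directly; the hypothesis is needed only to guarantee the precondition $\alpha > \C[j]$.

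There is no genuine obstacle here; the argument is a routine use of monotonicity together with the fact that $\alpha$ is defined as a minimum. The one point worth stressing is the role of that minimum: taking $\alpha$ to be the smallest of the $\C[i] + w[i,j]$ is exactly what blocks \emph{every} potential parent edge simultaneously for all $H$ below $\alpha$, so $\alpha$ is the largest safe advancement value for index $j$ (any larger value might activate the cheapest incoming edge and make $v_j$ feasible). This choice also meets the $\delta$-hypothesis of Theorem \ref{thm:mainllp}: since all edge weights are strictly positive, $\alpha(\C, j) - \C[j] \geq \min\{ w[i,j] \mid i \in pre(j)\} > 0$, so the advancement on $j$ is bounded away from zero and the corresponding instance of $LLP$ terminates. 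Finally, I would note that this lemma refines Lemma \ref{lem:CSSP-forbid}, which establishes $\forbidden(\C, j, B_{path})$ in the unquantified sense, by pinning down precisely how far index $j$ should be advanced.
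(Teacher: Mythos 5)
Your main argument is correct and is exactly the paper's proof in expanded form: the paper disposes of the lemma with the one-line observation that $\alpha(G,j)$ is the minimum amount by which $G[j]$ must increase for $v_j$ to acquire a parent, and your inequality chain $H[i]+w[i,j] \geq G[i]+w[i,j] \geq \alpha > H[j]$ is just the careful verification of that observation, including the correct discharge of the precondition $\alpha > G[j]$ from the hypothesis $\forall i: \neg parent(j,i,G)$.

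One side remark in your last paragraph is false, however: the claim that $\alpha(G,j)-G[j] \geq \min\{w[i,j] \mid i \in pre(j)\}$, offered to satisfy the $\delta$-hypothesis of Theorem \ref{thm:mainllp}. Take $pre(j)=\{i\}$ with $G[i]=3$, $G[j]=5$, $w[i,j]=3$: then $\neg parent(j,i,G)$ holds, yet $\alpha(G,j)-G[j]=6-5=1 < 3 = w[i,j]$. The gap is only guaranteed to be strictly positive, and it can be arbitrarily small for real weights; the paper itself concedes this when it notes that $ShortestPath_a$ ``may take non-polynomial time in the number of nodes because the advancement along a forbidden process may be small,'' which is precisely the motivation for introducing $B_{rooted}$ and $\beta(G)$ in Lemma \ref{lem:adv-beta}. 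Since this remark is outside the lemma's statement it does not invalidate your proof, but you should delete or weaken it to $\alpha(G,j)-G[j]>0$.
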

\begin{proof}
It is sufficient to observe that $\alpha(\C,j)$ is the minimum amount that $\C[j]$ must increase to have a parent.
\end{proof}

\begin{figure}[htb]
  {\small 
\fbox{\begin{minipage}[t]  {\textwidth}
\begin{tabbing}
xx\=xxxx\=xxxx\=xxxx\=xxxx\=xxxx\= \kill
 Use Algorithm $LLP$ with 
    \\
    \> $T = (M, M, \ldots, M)$, where $M = n * \max_{i,j: i \in pre(j)} w[i,j] $\\
\>   $parent(j, i, \C) \equiv (i \in pre(j)) \wedge (\C[j] \geq \C[i] + w[i,j])$\\
\>      $\forbidden(\C,j, B_{path}) \equiv \forall i: \neg parent(j, i, \C)$\\

\>     $\alpha(\C,j,  B_{path}) =  \min \{ \C[i] + w[i,j]  ~|~ i \in pre(j) \} $
   
\end{tabbing}
\end{minipage}
} 
}
\caption{{\bf Algorithm $ShortestPath_a$}
to find the minimum cost assignment vector less than or equal to $T$ \label{fig:alg-alpha}}
\end{figure}

We now have our first algorithm for computing the cost of the shortest path from a source vertex.
The algorithm called $ShortestPath_a$ (Lattice-Linear Shortest Path) shown in Fig. \ref{fig:alg-alpha} finds the minimum cost assignment vector $\C$ less than or equal to $T$ that is feasible.
We note that the algorithm returns null if any vertex in the graph is not reachable from the source vertex because
there is no assignment vector less than or equal to $T$ that is feasible. If one wants the algorithm to find the shortest path
to a specific target vertex, then the algorithm can be modified to delete the unreachable vertex from the graph (and the corresponding
component from $\C$ array) and then continue until either the desired target vertex is deleted or there exists no forbidden vertex.
If there is no forbidden vertex in $\C$, then any component $j$ of $\C$ has the cost of the shortest path from the source vertex to $j$.

For an unweighted graph (i.e., each edge has weight equal to $1$), the above parallel algorithm requires
time equal to the distance of the farthest node from the root. However, for a weighted graph, it may take 
non-polynomial time in the number of nodes because the advancement along a forbidden process may be small.

We now give an alternative feasible predicate that results in an algorithm that takes bigger steps.
%
%
We first define a node $j$ to be {\em fixed} in $\C$ if either it is the source node or it has a parent
that is a fixed node, i.e.,
$ fixed(j,\C) \equiv (j=0) \vee (\exists i: parent(j, i, \C) \wedge fixed(i,\C)).$

Observe that node $v_0$ is always fixed. Any node $v_j$ such that one can reach from $v_j$ to $v_0$ using
parent relation is also fixed.
We now define another feasible predicate called $B_{rooted}$, as
$$B_{rooted}(\C) \equiv  \forall j: fixed(j,\C).$$

Even though it may first seem that the predicate $B_{rooted}$ is strictly stronger than $B_{path}$, the following Lemma shows otherwise.
\begin{lemma}
$B_{path}(\C)$
iff  $B_{rooted}(\C)$.
\end{lemma}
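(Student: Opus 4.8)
The plan is to prove the two implications separately, with the backward direction being immediate and the forward direction carrying the real content. For the easy direction, $B_{rooted}(\C) \Ra B_{path}(\C)$, I would simply unfold the definition of $fixed$: for any non-source node $j$, $fixed(j,\C)$ requires the existence of some $i$ with $parent(j,i,\C)$ (indeed with $fixed(i,\C)$ as well). Hence if every node is fixed, then in particular every non-source node has a parent, which is exactly $B_{path}(\C)$.

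For the hard direction, $B_{path}(\C) \Ra B_{rooted}(\C)$, the key structural fact — already observed earlier in the paper — is that the parent relation is acyclic because the edge weights are \emph{strictly} positive. Concretely, $parent(j,i,\C)$ forces $\C[j] \geq \C[i] + w[i,j] > \C[i]$, so the value $\C[\cdot]$ strictly decreases as one moves from a node to any of its parents. I would then fix an arbitrary node $v_j$ and build a walk $v_j = u_0, u_1, u_2, \ldots$ in which each $u_{k+1}$ is a parent of $u_k$; such a parent exists by $B_{path}(\C)$ whenever $u_k$ is not the source. Along this walk the sequence $\C[u_0] > \C[u_1] > \C[u_2] > \cdots$ is strictly decreasing, so no vertex can repeat, and since the graph has finitely many vertices the walk must terminate. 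A walk can only terminate at a node that has no parent, and by $B_{path}(\C)$ the unique such node is the source $v_0$. Reading the resulting chain $u_0, \ldots, u_m = v_0$ backward from the source and using the recursive definition of $fixed$, an easy induction gives $fixed(u_m,\C)$ (base case $u_m = v_0$), then $fixed(u_{m-1},\C)$, and finally $fixed(v_j,\C)$. As $j$ was arbitrary, $B_{rooted}(\C)$ holds.

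The main obstacle is precisely the termination argument in the forward direction: one must rule out cycling forever among parent edges. This is exactly where strict positivity of the weights is indispensable, as it is what makes $\C$ decrease strictly along parent edges and thereby forbids cycles; without it (e.g., with zero-weight edges) a node could have a parent yet never reach the source, and the two predicates would genuinely differ.
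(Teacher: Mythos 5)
Your proof is correct and follows essentially the same route as the paper: the backward direction by unfolding the definition of $fixed$, and the forward direction by following parent edges to $v_0$, using the acyclicity of the parent relation that the paper had already established from the strictly positive weights. You merely make explicit what the paper leaves terse (the strictly decreasing $\C$-values forcing termination, and the induction along the chain to propagate $fixed$), which is a faithful elaboration rather than a different argument.
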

\begin{proof}
If $\C$ satisfies $B_{rooted}$, then every node other than $v_0$ has at least one parent by definition of $fixed$, hence $B_{path}(\C)$.
Conversely, suppose that every node except $v_0$ has a parent. Since parent edges cannot form a cycle, by following the parent
edges, we can go from any node to $v_0$.
\end{proof}

It follows that the predicate $B_{rooted}$ is also lattice-linear. An advantage of the predicate  $B_{rooted}$ is that
it allows us to define a different $\alpha$ for advancement whenever the assignment vector $\C$ is not $B_{rooted}$.
Suppose that $\neg B_{rooted}(\C)$. Then, the following threshold $\beta(\C)$ is well-defined whenever
the set of edges from the fixed vertices to non-fixed vertices is nonempty.
$$
\beta(\C) = \min_{(i,j): i \in pre(j) } \{ \C[i] + w[i,j]   ~|~  fixed(i,\C),  \neg fixed(j,\C) \}.
$$
If the set of such edges is empty then no non-fixed vertex is reachable from the source.

We now have the following result in advancement of $\C$.
\begin{lemma}\label{lem:adv-beta}
Suppose $\neg B_{rooted}(\C)$.
Then, $\neg fixed(j, \C) \Ra \forbidden(\C, j, B_{rooted}, \beta(\C) )$.
\end{lemma}
\begin{proof}
Consider any assignment vector $H$ such that $H \geq \C$ and $H[j] <  \beta(\C)$.
We show that $H$ is not $B_{rooted}$. In particular, we show that $j$ is not fixed in $H$.
Suppose $j$ is fixed in $H$. This implies that there is a path $W$ from $v_0$ to $v_j$ such that all nodes in that path are
fixed. Let the path be the sequence of vertices $w_0, w_1, \ldots w_{m-1}$, where $w_0=v_0$ and $w_{m-1} = v_j$.
Let $w_l = v_{k}$ be the first node in the path that is not fixed in $\C$. Such a node exists because $w_{m-1}$ is not fixed in $\C$.
Since $w_0$ is fixed, we know that $1 \leq l \leq m-1$.
The predecessor of $w_l$ in that path, $w_{l-1}$ is well-defined because $l \geq 1$.
Let $w_{l-1} = v_{i}$.

We show that $H[k] \geq  \beta(\C)$ which contradicts $H[j] <  \beta(\C)$ because $H[k] \leq H[j]$ as the cost can only
increase going from $k$ to $j$ along the path $W$. We have $H[k] \geq H[i] + w[i,k]$ because $i$ is a parent of $k$ in $H$. Therefore,
$H[k] \geq \C[i] + w[i,k]$ because $H[i] \geq \C[i]$. Since $i$ is fixed in $\C$ and $k$ is not fixed in $\C$, from the definition 
of $\beta(\C)$, we get that $\beta(\C) \leq \C[i] + w[i,k]$. Hence, $H[k] \geq \beta(\C)$.
\end{proof}

By combining the advancement Lemma \ref{lem:adv-beta} with Lemma \ref{lem:adv-alpha}, we get the algorithm $ShortestPath_b$ shown in Fig. \ref{fig:alg-beta}.
In this algorithm, in every iteration of the while loop we find an edge going from a fixed node $i$ to a non-fixed node $j$ that minimizes
$\C[i] + w[i,j]$. All nodes are advanced to $\alpha(\C, j)$  that combines $\beta(\C)$ with $\alpha(\C, j)$ of the algorithm $ShortestPath_a$ as shown in Fig. \ref{fig:alg-beta}.
Note that if a node is fixed, its parent is fixed and therefore any algorithm that advances $\C[j]$ only for non-fixed nodes $j$
maintain that once a node becomes fixed it stays fixed.

In the graph of Fig. \ref{fig:myGraph}, $\C$ is initially $(0,0,0,0,0)$. 
After the first iteration, $\beta(\C) = 2$ and $\alpha(1, \C) = 9, \alpha(2, \C) = 2, \alpha(3, \C) = 3, \alpha(4, \C) = 2$. Therefore, $\C$ becomes $(0,9,2,3,2)$. At this point, nodes
$v_0$, $v_1$ and $v_2$ are fixed. In the second iteration, $\beta(\C) = 7$, $\alpha(3, \C) = \max(7, \min(12, 8, 10)) = 8$ and $\alpha(4, \C) =
\max(7, \min(11, 7)) = 7$. Hence, $\C = (0, 9, 2, 8, 7)$. At this point, all nodes are fixed and the algorithm terminates.


We first note that by removing certain steps, we get Dijkstra's algorithm from the algorithm $ShortestPath_b$. It is clear that
the algorithm stays correct if $\alpha(\C, j)$ uses just $\beta(\C)$ instead of $\max \{ \beta(\C),  \min \{ \C[i] + w[i,j]  ~|~ i \in pre(j) \} \}$.
Secondly, the algorithm stays correct if we advance $\C$ only on the node $j$ such that $(i, j) \in E'$ minimizes $\C[i] + w[i,j]$.
Finally, to determine such a node and $\beta(\C)$, it is sufficient to maintain a min-heap of
all non-fixed nodes $j$, along with the label that
equals $\min_{i \in pre(j), fixed(i, \C)}  \C[i] + w[i,j]$. 
On making all these changes to $ShortestPath_b$, we get Dijkstra's algorithm
(modified to run with a heap).

\remove{
It is illustrative to compare the algorithm $ShortestPath_b$ with Dijkstra's algorithm. In Dijkstra's algorithm, the nodes become fixed in the order
of the cost of the shortest path to them. In the proposed algorithm, a node may become fixed even when nodes with shorter cost have not been discovered.
In Fig. \ref{fig:myGraph}, node $v_1$ becomes fixed earlier than nodes $v_3$ and $v_4$.
This feature is especially useful when we are interested in finding the shortest path to a single destination and that
destination becomes fixed sooner than it would have been in Dijkstra's algorithm.
Dijkstra's algorithm maintains a distance vector $dist$ such that it is always feasible, i.e., 
for any vertex $v$ there exists a path from source to $v$ with cost less than or equal to $dist[v]$.
We maintain the invariant that the cost of the shortest path from source to $v$
is guaranteed to be at least $\C[v]$. Therefore, in Dijkstra's algorithm,
$dist[v]$ is initialized to $\infty$ whereas we initialize $\C$ to $0$. 
Dijkstra's algorithm and indeed many algorithms for combinatorial optimization, such as simplex, start
with a feasible solution and march towards the optimal solution, our algorithm starts with an extremal point in search space (even if it is infeasible)
and marches towards feasibility. Also note that in Dijkstra's algorithm, $dist[v]$ can only decrease during execution. 
In our algorithm, $\C$ can only increase with execution.
}
The $ShortestPath_b$ algorithm and indeed all the algorithms in this paper have a single variable $\C$. All other predicates and functions are defined
using this variable. 


\begin{figure}[htb]
{\small 
\fbox{\begin{minipage}[t]  {\textwidth}
\begin{tabbing}
xxxx\=xxxx\=xxxx\=xxxx\=xxxx\=xxxx\= \kill
Use Algorithm $LLP$ with \\
  \> $T = (M, M, \ldots, M)$, where $M = n * \max_{i,j} w[i,j] $\\
 \>       $parent(j, i, \C) \equiv (i \in pre(j)) \wedge (\C[j] \geq \C[i] + w[i,j])$\\
 \>   $ fixed(j,\C) \equiv (j=0) \vee (\exists i: parent(j, i, \C) \wedge fixed(i,\C)) $\\
 \>    $\forbidden(\C,j) \equiv \neg fixed(j, \C)$\\

\>  $E'$ := \{ $(i,k) ~|~i \in pre(k) \wedge fixed(i,\C) \wedge \neg fixed(k,\C)\} $;\\
\>      $\beta(\C) =  \min \{ \C[i] + w[i,j] ~|~ (i,j) \in E' \}  $\\
 
 \>      $\alpha(\C,j) = \max \{ \beta(\C),  \min \{ \C[i] + w[i,j]  ~|~ i \in pre(j) \} \}$

\end{tabbing}
\end{minipage}
} 
}
   
\caption{ {\bf Algorithm $ShortestPath_b$} to find the minimum cost assignment vector less than or equal to $T$. \label{fig:alg-beta}}
\end{figure}

We now consider the generalization of the shortest path algorithm with constraints. We assume that all constraints specified are lattice-linear.
For example, consider the following constraints:
\begin{itemize}
\item
Find the minimum cost vector such that cost of vertex $i$ is at most cost of vertex $j$. The predicate $B \equiv \C[j] \geq \C[i]$
is easily seen to be lattice-linear. If any cost vector $\C$ violates $B$, then the component $j$ is forbidden (with $\alpha(\C, j)$ equal to $\C[i]$).
\item
The predicate $(\C[i] = \C[j])$ is lattice-linear because it can be written as a conjunction of two lattice-linear predicates
$(\C[i] \geq \C[j])$ and $(\C[j] \geq \C[i])$.
\item
The predicate $B \equiv (\C[i] \geq k) \Ra (\C[j] \geq m)$ is also lattice-linear. If any cost vector violates $B$, then
we have $(\C[i] \geq k) \wedge (\C[j] < m)$. In this case, the component $j$ is forbidden with $\alpha(\C, j)$ equal to $m$.
\item
The predicate $B \equiv (\forall i: \C[i] \geq F[i])$  for any fixed vector $F$ is lattice-linear. It is sufficient to show that $(\C[i] \geq F[i])$ is lattice-linear.
If the predicate $(\C[i] \geq F[i])$ is false, then the component $i$ is forbidden with $\alpha(\C, j)$ equal to $F[i]$.
\end{itemize}

Again, from Lemma \ref{lem:cons}, the algorithm $LLP$ can be used to solve the constrained shortest path algorithm by 
combining $\forbidden$ and $\alpha$ for constraints with $B_{rooted}$.

An example of a predicate that is not lattice-linear is $B \equiv \C[i] + \C[j] \geq k$. If the predicate is false for $\C$, then 
we have $\C[i] + \C[j] < k$. However, neither $i$ nor $j$ may be forbidden. The component $i$ is not forbidden because if $\C[i]$ is fixed but $\C[j]$ is increased, the
 predicate $B$ can become true. Similarly, $j$ is also not forbidden.

\section{Constrained Market Clearing Price}
\label{sec:cmcp}

\newcommand{\B}{U}
In this section, we apply our technique to the problem of finding a market clearing price with constraints.
%
Let $I$ be a set of indivisible $n$ items, and 
$\B$, a set of $n$ bidders. 
Every item $i\in I$ is given a {\em valuation} 
$v_{b,i}$ by each bidder $b \in \B$. The valuation of any item $i$
is a 
number between $0$ and $T[i]$.
Each item $i$ is given a price $\C[i]$ which is also a 
number between $0$ and $T[i]$.
We are assuming integral costs for simplicity --- the algorithm is easily extensible to real costs.
%

Given a price vector $\C$, we define the bipartite graph $(I, \B, E(\C))$ as
\[ (j,b) \in E(\C) \equiv
\forall i: (v_{b,j} - \C[j]) \geq (v_{b,i} - \C[i]). \]
Informally, an edge exists between item $i$ and bidder $b$ if the payoff for the bidder (the bid minus the price) is maximized with that item.
Given any set $\B' \subseteq \B$, let $N(\B', \C)$ denote all the items that are adjacent to
the vertices in $\B'$ in the graph $(I, \B, E(\C))$.
A price vector $\C$ is a {\em market clearing price}, denoted by $B_{clearingPrice}(\C)$ if
the bipartite graph $(I, \B, E(\C))$ has a perfect matching. 
We now generalize the problem of finding a market clearing price to that of finding a constrained 
market clearing price. 
Given any set of valuations, and a boolean predicate $B$ that is a conjunction of lattice-linear constraints, a
price vector $\C$ is a {\em constrained market clearing price}, denoted by $constrainedClearing(\C)$
iff $clearing(\C) \wedge B(\C)$. From Lemma \ref{lem:cons}, it is sufficient to give an algorithm for $clearing(\C)$.

%
%
%

We now claim that

\begin{lemma}
The predicate $B_{clearingPrice}(\C)$ is a lattice-linear predicate on the lattice of price vectors.
\end{lemma}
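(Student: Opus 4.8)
The plan is to invoke Lemma \ref{lem:lin}(b), which reduces lattice-linearity to showing that the set of market clearing price vectors is closed under the meet (component-wise minimum) operation. So I would state the goal as: if $\C_1$ and $\C_2$ are both market clearing prices, then $\C = \min(\C_1, \C_2)$ (taken component-wise) is also a market clearing price. By the definition of $B_{clearingPrice}$, it suffices to show that the bipartite graph $(I, \B, E(\C))$ admits a perfect matching whenever both $(I, \B, E(\C_1))$ and $(I, \B, E(\C_2))$ do.

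First I would analyze the edge structure of the meet. For a fixed bidder $b$, let $D_b(\C) = \{ j : \forall i: (v_{b,j} - \C[j]) \geq (v_{b,i} - \C[i]) \}$ be the set of items maximizing $b$'s payoff under price vector $\C$; this is the demand set of $b$. The key observation I would try to establish is a relationship between the demand set $D_b(\C)$ at the meet and the demand sets $D_b(\C_1)$, $D_b(\C_2)$. Specifically, I would argue that if an item $j$ is demanded by $b$ under $\C_1$ and its price was not lowered in passing to $\C$ (i.e. $\C[j] = \C_1[j]$), then $j$ is still demanded under $\C$, because lowering other prices can only increase the relative payoff of items whose prices dropped, and I can compare $b$'s payoff for $j$ against any competitor. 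The cleaner route is to use the standard lattice structure of clearing prices (attributed to Shapley-Shubik, cited as \cite{shapley1971assignment}): I would show the demand sets behave well under meet and then appeal to Hall's theorem / a matching argument to certify a perfect matching in $E(\C)$.

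The cleanest concrete approach I would pursue uses the contrapositive form of Lemma \ref{lem:lin}(b) directly, mirroring the job-scheduling and path arguments: assume $\neg B_{clearingPrice}(\C)$, meaning $(I,\B,E(\C))$ has no perfect matching, and exhibit a forbidden item. By Hall's theorem there is a set of bidders $\B'$ with $|N(\B',\C)| < |\B'|$ (an overdemanded set). The economically natural move, following the Demange-Gale-Sotomayor auction, is that every item in the neighborhood $N(\B',\C)$ is forbidden: its price must rise before a perfect matching can exist. So I would show that for $j \in N(\B', \C)$, for every $H \geq \C$ with $H[j] = \C[j]$, the graph $E(H)$ still has no perfect matching because $\B'$ remains overdemanded (raising other prices only shrinks or preserves the overdemand, and $j$'s price staying fixed keeps the incentive structure that forces $\B'$ onto this constricted neighborhood). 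This directly yields $\forbidden(\C, j, B_{clearingPrice})$.

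The hard part will be the monotonicity argument in the $\alpha$-forbidden / Hall's-condition step: I must verify carefully that raising the prices of items \emph{outside} the fixed set (going from $\C$ to some $H \geq \C$ with $H[j] = \C[j]$) cannot repair the Hall violation as long as $j$'s price is held fixed. This requires tracking how the demand sets $D_b$ evolve as prices increase and confirming that the overdemanded set $\B'$ (or some overdemanded set) persists, which is the crux of why the minimum overdemanded set's neighborhood is the right forbidden set. I expect the bookkeeping of whether edges can \emph{appear} when other items' prices rise to be the subtle point, and I would handle it by showing that any new edge from a bidder in $\B'$ must go to an item whose price was held fixed, hence into $N(\B',\C)$ itself, so the neighborhood does not grow in a way that resolves the deficiency.
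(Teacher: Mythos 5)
Your top-level route matches the paper's: the paper also argues by contraposition, invokes Hall's theorem to extract a \emph{minimal} over-demanded set $J = N(U',\C)$, and asserts that every $j \in J$ is forbidden, delegating the crux to the proof of Theorem 1 of Demange--Gale--Sotomayor \cite{demange1986multi}. (Your opening detour through Lemma \ref{lem:lin}(b) and meet-closure is one you abandon yourself, so the real comparison is with your last two paragraphs.) The problem is that your sketch of the crux contains a genuine gap, in two places. First, you claim that for \emph{any} Hall-violating bidder set $U'$, every item of $N(U',\C)$ is forbidden. This is false, and minimality is essential rather than cosmetic. Example: items $a,b$; bidders $1,2$ demand only $a$; bidder $3$ demands both $a$ and $b$. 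Then $U' = \{1,2,3\}$ is over-demanded with $N(U',\C) = \{a,b\}$, yet a clearing price $H \geq \C$ with $H[b] = \C[b]$ can exist: raise only $a$'s price until bidder $1$ or $2$ migrates to $b$. The minimal over-demanded set is $\{a\}$, and only $a$ is forbidden.

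Second, your repair of the ``can new edges appear'' issue is wrong. You claim any new demand edge from a bidder in $U'$ must go to an item whose price was held fixed, hence into $N(U',\C)$. In fact, for a bidder \emph{all} of whose demanded items strictly rose in price, the maximum payoff drops and new edges can appear to arbitrary items outside $N(U',\C)$; so the same $U'$ need not remain over-demanded. The correct mechanism, which is the content of the DGS argument you would have to reproduce, is this: for $H \geq \C$, if a bidder $b$ retains \emph{some} demanded item with unchanged price, then its demand set satisfies $D_b(H) = \{\, i \in D_b(\C) : H[i] = \C[i] \,\}$ --- it shrinks to exactly the unchanged-price demanded items and acquires no new members (any $i$ with $v_{b,i} - H[i]$ equal to the old maximum must satisfy $v_{b,i} - \C[i] \geq$ that maximum, hence $i \in D_b(\C)$ and $H[i] = \C[i]$). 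One then sets $S_0 = \{\, i \in J : H[i] = \C[i] \,\}$ (nonempty since $j \in S_0$) and lets $U''$ be the bidders of $U'$ demanding some item of $S_0$; their demand under $H$ lies inside $S_0$. Minimality of $J$ supplies the count: bidders of $U'$ outside $U''$ demand only within $J \setminus S_0$, which by minimality is not over-demanded, so $|U' \setminus U''| \leq |J| - |S_0|$, whence $|U''| > |S_0|$ and Hall's condition fails again for $H$. Without this restriction to the sub-coalition $U''$ and the minimality-based counting, your persistence claim does not go through.
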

\begin{proof}
%
From $\neg B_{clearingPrice}(\C)$, we get that
$(I, \B, E(\C))$ does not have a perfect matching.
From Hall's theorem, there exists a {\em minimal} set of over-demanded items $J$, i.e., 
there exists a set of bidders $\B'$ with item set $J=N(\B', \C)$ such that 
the size of $J$ is smaller than $\B'$ and no proper subset of $J$ is over-demanded. It can be shown that 
any item $j \in J$ satisfies
$\forbidden(j, \C, B_{clearingPrice})$ (from the proof of Theorem 1 in  \cite{demange1986multi}). 
%

\end{proof}

\begin{figure}[htb]
  {\small 
\fbox{\begin{minipage}[t]  {\textwidth}
\begin{tabbing}
xx\=xxxx\=xxxx\=xxxx\=xxxx\=xxxx\= \kill
 Use Algorithm $LLP$ with 
    \\
     \> $\forall i: T[i] =  \max_{b} v_{b,i} $\\
\>   $E(\C) =  \{ (i,b) ~|~ \forall j:  (v_{b,i} - \C[i]) \geq (v_{b,j} - \C[j])\} $\\
\> $overDemanded(J, \C) \equiv \exists U' \subseteq U: (J = N(U', \C))$\\
\> \>  $\wedge (|J|  < |U'|)$\\
 \> $minimalOverDemanded(J, \C)   \equiv  overDemanded(J, \C)$\\
 \> \>  $\wedge \forall J' \subseteq J: overDemanded(J', \C) \Rightarrow (J' = J)$\\
\>   $\forbidden(\C,j, B_{clearingPrice}) \equiv$\\
\> \> $ \exists J: minimalOverDemanded(J, \C) \wedge (j \in J) $\\
\> $\alpha(G,j,B_{clearingPrice}) = (G[j]+1)$;
%
   
\end{tabbing}
\end{minipage}
} 
}
\caption{{\bf Algorithm ConstrainedMarketClearingPrice}
to find the minimum cost assignment vector less than or equal to $T$ \label{fig:alg-price}}
\end{figure}

It follows that the set of constrained market clearing price vectors is closed under meets. By applying 
the lattice-linear predicate detection, we get an algorithm to compute the least constrained market clearing price shown in 
 Fig. \ref{fig:alg-price}.  In conjunction with Lemma \ref{lem:cons}, we get a
 generalization of Demange, Gale and Sotomayor's exact
auction mechanism \cite{demange1986multi} to incorporate lattice-linear constraints on the market clearing price.

In Fig. \ref{fig:alg-price}, we have used $\alpha(\C, j)$ as simply one unit of price.
For any item $j$ that is part of a minimal over-demanded set of items, we can increase its price by the minimum amount 
to ensure that some bidder $b$ can switch to her second most preferred item.

\section{Computing All Constrained Stable Matchings}
\label{sec:slice}

We now consider the problem of computing all constrained stable matchings. Since the number of stable matchings may be exponential in 
$n$, instead of keeping all matchings in explicit form, we would like a concise representation of polynomial size that
can be used to enumerate all constrained stable matchings. 
In SMP literature, rotation posets are used to capture all stable matchings.
We give a method based on a constructive version of Birkhoff's Theorem that can be
used to capture all stable matchings that satisfy external constraints  \cite{Birk3, mittal2001computation}.
A rotation poset \cite{gusfield1989stable} is a special case of our method when the set of external constraints is empty.
%
\remove{
The idea that Birkhoff's theorem is applicable for computing a concise representation of all matchings is explicitly mentioned 
in \cite{gusfield1989stable}. However, they state that ``{\em For algorithmic applications, Birkhoff's theorem is not the ideal tool to use to develop the structure of stable matching.
In short, while Birkhoff'theorem shows the existence of a partial order representing the distributive lattice, the theorem by itself does not suggest when or how the partial
order can be computed efficiently from the problem input.}'' 
We show that, on the contrary,  the lattice-linear predicate detection algorithm allows efficient construction of the
partial order to concisely represent the set of all (constrained) stable matchings.
}

We first define the dual concept of a lattice-linear predicate.
Just as a lattice-linear predicate allows us to start with the bottom element of the lattice and advance in the forward direction,
its dual allows us to start with the top element and advance in the backward direction.
Given any distributive lattice $\CC$ of $n$-dimensional vectors, and any predicate $B$, we say
  $ \mbox{reverse-forbidden}(\C,i, B) \equiv \forall H \in \CC : H \leq \C: (G[i] = H[i]) \Rightarrow
  \neg B(H).$
Observe that for {\em forbidden}, we considered $H \geq \C$, whereas for reverse-forbidden, we consider $H \leq \C$.
We define a predicate $B$ to
be {\em post-lattice-linear}  if for any $\C \in \CC$,
 $B$ is false in $\C$ implies that $\C$ contains a
{\em reverse-forbidden state}. Formally,
A boolean predicate $B$ is {\em {post-lattice-linear}} 
iff:
$\forall \C \in \CC: \neg B(\C) \Ra \exists i: \mbox{reverse-forbidden}(\C,i,B)$

It can be shown that $B_{marriage}$ is not only lattice-linear but also post-lattice-linear.

\remove{
We now show that $B_{marriage}$ is not only lattice-linear but also post-lattice-linear.
Even though the concepts of lattice-linear predicate and post-lattice-linear predicates are dual,
the proof of post-lattice-linearity is different from that of lattice-linearity. This is because we are computing
assignments with respect to men and not women thereby bringing in asymmetry.
The proof allows us to find woman-optimal constrained stable marriage by traversing men preferences in the
reverse direction (rather than traversing women preferences).

\begin{lemma}\label{lem:reverse}
 $B_{marriage}$ is a post-lattice-linear predicate.

 \end{lemma}
 \begin{proof}
 Let $\C$ be any assignment such that it is not a constrained stable matching. We need to show that
 there exists $i$ such that $\C[i]$ is reverse-forbidden.
 
 First assume that $\C$ does not satisfy constraints, i.e.,
 there exists $f \in \C$ and $e \not \in \C$ such that $e \ra_p f$.
 Let $f$ be on $P_i$. Then, $\mbox{reverse-forbidden}(\C, i, B)$ holds.
 
 Now assume that $\C$ is consistent but not a matching.
 We know that there is at least one woman $q$ who is missing in this assignment.
 Let  $i$ be the most preferred man for that woman in the pre-frontier events that correspond to 
 proposal to $q$.
 If there is no man in pre-frontier events for that woman who proposes to $q$, then there is no matching possible
 because $q$ is not proposed in either a frontier or a pre-frontier event. In that case, any index can serve as reverse-forbidden.
 So, assume that $i$ exists.
 We claim that $\C[i]$ is reverse-forbidden. 
 Consider any $H$ such that $H \subseteq \C$
 and $H[i] = \C[i]$. We need to show that $H$ cannot be a stable matching.
For $H$ to be a matching, for some $k$ different from $i$, $H[k]$ corresponds to woman $q$.
However, by our choice of $i$, $q$ prefers man $i$ and man $i$ prefers $q$ to the woman assigned in $H[i]$.  

 Finally, assume that $\C$ is a matching, but not stable.
 Suppose that $(j,k)$ is a blocking pair
 (i.e., there is a green arrow from some frontier event to a pre-frontier event).
 Let $q$ be the woman corresponding to $\C[j]$.
 Of all the men who propose to that woman in pre-frontier events, choose the one 
 who is the most preferred for that woman. We know that there exists at least one because
 $(j, k)$ is a blocking pair. Let that man be $i$.
 We claim that $\C[i]$ is reverse-forbidden. Consider any $H$ such that $H \leq \C$
 and $H[i] = \C[i]$. We need to show that $H$ cannot be a stable matching.
 Since $\C$ is a matching,  the woman corresponding to $\C[i]$ is different from that corresponding to $\C[j]$ (viz. $q$).
 For $H$ to be a matching,
 for some $l$, the woman corresponding to $H[l]$ equals $q$. However, by our choice of $i$, we know that
 $q$ prefers man $i$ to $l$ and man $i$ prefers $q$ to  the woman corresponding to $H[i]$. Hence, $H$ is not a stable matching.
 \end{proof}

 The above lemma allows us to find the man-pessimal constrained stable matching.
 We start with  $\C$ such that $\C[i]$ equals the last choice proposal for $P_i$.
 If $\C$ is a constrained stable matching, we are done. Otherwise, from
 the proof of Lemma \ref{lem:reverse}, we can find $i$ such that unless $\C[i]$ goes backward,
 there cannot be any constrained stable matching. By repeating this procedure, we get the man-pessimal
 constrained stable matching.
 
 }
Since constrained stable matching is a post-lattice-linear predicate, from the dual of Lemma \ref{lem:lin}(b) it follows that the {\em feasible set}, the set of assignments
satisfying $B_{marriage}$, is also closed under joins. Therefore, the feasible set forms a sublattice of the lattice of all assignments.
Since a sublattice of a distributive lattice
is also distributive, the set of assignments that satisfy constrained stable marriage
forms a finite distributive lattice.
From Birkhoff's theorem \cite{davey}
we know that a finite distributive lattice
can be equivalently represented using the poset of its
join-irreducible elements. 
We now show that join-irreducible elements of the feasible sublattice can be constructed efficiently (i.e.
without constructing the sublattice which may be exponential in $n$ in its size).
%
The set of all elements of $\CC$ satisfying $B$ can be generated as the order ideals
of the following poset $ (\{ J(B, e) | e \in E \}, \subseteq) $
where $J(B, e)$ is the minimum order ideal of $(E, \leq)$ that satisfies $B$ and contains $e$.
It can be verified that  $J(B, e)$ is a join-irreducible element and that every join-irreducible element is of this form.
Interested readers can find details in \cite{mittal2001computation}.
The poset $ (\{ J(B, e) | e \in E \}, \subseteq) $ is called a {\em slice} in \cite{mittal2001computation}.

Hence, to compute the slice, it is sufficient to give a procedure to compute $J(B, e)$.
To determine $J(B, e)$ it is sufficient to 
use the algorithm for detecting a lattice-linear predicate by using the following predicate for every $e$:
$B_e(G) \equiv B(G) \wedge (e \in G)$.
Since $B$ is a lattice-linear predicate, and the predicate $e \in G$ is also lattice-linear, $B_e(G)$ is also lattice-linear.
Therefore, by using the algorithm for the constrained stable marriage introduced in this paper, we also
get an algorithm to compute the slice.

We illustrate this procedure by computing $J(B, (P_1, w_2))$.
We start with $[w_1, w_3, w_4, w_2]$ since $J(B, (P_1, w_1)) = [w_1, w_3, w_4, w_2]$.
On changing the assignment of $P_1$ to $w_2$, we get
$[w_2, w_3, w_4, w_2]$. Since $w_2$ prefers $P_1$ to $P_4$, we advance on $P_4$ to get
$[w_2, w_3, w_4, w_4]$. Since $w_4$ prefers $P_3$, we advance again on $P_4$ to get
$[w_2, w_3, w_4, w_3]$. Since $w_3$ prefers $P_2$, we advance on $P_4$ to get
$[w_2, w_3, w_4, w_1]$. This is a constrained stable matching.

By computing the set $ \{ J(B, e) | e \in E \}$, in this manner, we can 
compute the slice for any constrained SMP graph.
An analogous method can be used to compute a concise representation of all constrained market clearing prices when
the prices are integral.

\remove{
%
Let $A[i,j]$ be a boolean matrix of size $m$ by $m$. Our goal is to compute the transitive closure of $A$.
We view this as searching for the least vector $\C$ indexed by tuple $(i,j)$ where $i,j \in [0..m-1]$ such that 
$B_{closure} \equiv \forall i,j: \C[i,j] \geq \max (A[i,j], \max  \{\C[i,k] \wedge \C[k,j] ~|~ k \in [0..m-1] \}) $.
Since the right hand side of the inequality is a monotone function of $\C$, it follows that $B_{closure}$ is lattice-linear.
We can apply $LLP$ algorithm to compute the transitive closure. 

Let $A$ be an array of $n$ natural numbers. Our goal is to find the GCD of these numbers.
When all elements of $A$ are divided by the GCD, we get a quotient vector which is also a
vector of natural numbers. Finding GCD of $A$ is equivalent to finding the minimum vector
$\C$ such that there exists a number $d$ such that for all $i$,
$A[i] = d*\C[i]$. 
The problem can be formulated as finding minimum $\C$ such that 
$\forall i: \C[i] = A[i]/d$. Equivalently, our goal is to find minimum $\C$ such that $B_{gcd}(\C) \equiv \forall i,j: A[i]/\C[i] = A[j]/\C[j]$,
This feasibility predicate, $B_{gcd}$, is equivalent to $\forall j: \C[j] \geq \max_i \{A[j]/A[i] * \C[i]\}$.
Since the right hand side is a monotone function, we know that the predicate is lattice-linear.
We can apply $LLP$ algorithm with 
$\forbidden(\C,j)$ as $\C[j] < \max_i \{A[j]/A[i] * \C[i]\}$ and $\alpha(\C, j) = \lceil  \max_i \{A[j]/A[i] * \C[i] \rceil$.

We show how to compute recursive sum of the array $A$ of size $n$.
We assume that the array size is a power of $2$ for simplicity.
We would like to compute
an array $\C$ of size $n$ such that $\C[1]$ contains the sum of the entire array. Furthermore, if $\C[i]$ has the sum of a range $R$
of an array $A$, then we want $\C[2*i]$ to contain the sum of the left half of the range $R$ and $\C[2*i+1]$ to contain the
range of the right half of the range $R$. The indices $2*i$ and $2*i+1$ come from the index of the left child and the right child
when a perfect binary tree is stored in an array (as in binary heap). Then, the least vector $\C$ that satisfies
$B_{sum}(\C) \equiv 
(\forall j: 1 \leq j < n/2 : \C[j] \geq \C[2*j] + \C[2*j+1])
\wedge (\forall j: n/2 \leq j \leq n-1: \C[j] \geq A[2*j - n] + A[2*j+1-n])$
is equal to the recursive sum. The array $\C[1..n/2-1]$ contains the sum of the left and the right children which are in $\C$ itself.
The array range $\C[n/2..n-1]$ contains the sum of elements of $A$ viewed as the leaves of the tree.
The parallel algorithm $LLP$ computes $\C$ in $O(\log n)$ time.

Instead of using $+$, we can use any other associative operator such as {\em min} or {\em max}.
Furthermore, a similar technique can be applied to derive a more time efficient parallel prefix algorithm than provided in Section \ref{sec:lattice-linear}.
%
%
%
}
\section{Conclusions and Future Work}
\label{sec:conc}

We have shown that many combinatorial optimization problems can be solved by applying
our lattice-linear predicate algorithm $LLP$. Specifically, we have shown that more general versions
of the stable matching problem, the single source shortest path problem and the market clearing prices problem can be
solved using $LLP$. 

There are many future directions for this work. Are there efficient lattice-linear predicate detection based parallel algorithms for 
the max-flow problem? 
%
Can the properties of the lattices be exploited to speed up the computation?
Are there efficient methods when the feasibility predicate is not lattice-linear?



\section{Acknowledgements.}
I thank Calvin Ly and Rohan Garg for many discussions on this topic.
I also thank Calvin Ly for implementing a version of LLP algorithm in Java.

\bibliography{refs,refs2,refs3}

\end{document}